\begin{document}

\title{Capacity and coding for the Ising Channel with Feedback \footnote{Ohad Elishco and Haim Permuter are with the Department of Electrical and 
Computer Engineering, Ben-Gurion University of the Negev, Beer-Sheva, Israel. Emails:ohadeli@post.bgu.ac.il, haimp@bgu.ac.il}}
\author{Ohad Elishco, Haim Permuter}

\maketitle

\begin{abstract}
The Ising channel, which was introduced in $1990$, is a channel with memory that models Inter-Symbol interference.
In this paper we consider the Ising channel with feedback and
find the capacity of the channel together with a capacity-achieving coding scheme.
To calculate the channel capacity, an equivalent dynamic programming (DP) problem is formulated and solved.
Using the DP solution, we establish that the feedback capacity is the expression $C=\left(\frac{2H_b(a)}{3+a} \right)\approx 0.575522$ where $a$ is
a particular root of a fourth-degree polynomial and $H_b(x)$ denotes the binary entropy function.
Simultaneously, $a=\arg \max_{0\leq x \leq 1} \left(\frac{2H_b(x)}{3+x} \right)$.
Finally, a simple, error-free, capacity-achieving coding scheme is provided together with outlining a strong connection between the
DP results and the coding scheme.
\end{abstract}

\begin{keywords}
Bellman Equation, Dynamic program, Feedback capacity, Ising channel, Infinite-horizon, value iteration.
\end{keywords}

\newtheorem{question}{Question}
\newtheorem{claim}{Claim}
\newtheorem{guess}{Conjecture}
\newtheorem{definition}{Definition}
\newtheorem{fact}{Fact}
\newtheorem{assumption}{Assumption}
\newtheorem{theorem}{Theorem}
\newtheorem{lemma}{Lemma}
\newtheorem{ctheorem}{Corrected Theorem}
\newtheorem{corollary}[theorem]{Corollary}
\newtheorem{proposition}{Proposition}
\newtheorem{example}{Example}

\section{Introduction}
The Ising model originated as a problem in statistical mechanics.
It was invented by Lenz in $1920$ \cite{isingmodel1}, who gave it as a problem to his student, Ernst Ising, after whom it is named \cite{isingmodel2}.
A few years later the two dimensional Ising model was analytically defined by Onsager \cite{twodimisingchannel}.
The Ising channel, on the other hand, was introduced as an information theory problem by Berger and Bonomi in $1990$ \cite{CapacityandzeroerrorcapacityofIsingchannel}.
It has received this name due to the resemblance to the physical Ising model.

The Ising channel works as follows: at time $t$ a certain bit, $x_t$, is transmitted through the channel.
The channel output at time $t$ is denoted by $y_t$.
If $x_t=x_{t-1}$ then $y_t=x_t$ with probability $1$. If $x_t\neq x_{t-1}$ then $y_t$ is distributed Bernoulli $\left(\frac{1}{2}\right)$.

In their work on the Ising channel, Berger and Bonomi found the zero-error capacity and a numerical approximation of the capacity of the Ising channel \emph{without} feedback.
In order to find the numerical approximation, the Blahut-Arimoto Algorithm was used \cite{arimoto}, \cite{blahut}.
The capacity was found to be bounded by $0.5031\leq C \leq 0.6723$ and the zero-error capacity was found to be $0.5$ bit per channel use.
Moreover, their work contains a simple coding scheme that achieves the zero-error capacity.
This code is the basis for the capacity-achieving coding scheme in the presence of feedback presented in this paper.

We consider the Ising channel with feedback, which models a channel with Inter-Symbol Interference (ISI).
The objective is to find the channel feedback capacity explicitly, and to provide a simple, capacity-achieving coding scheme.
Finding an explicit expression for the capacity of non-trivial channels with memory, with or without feedback, is usually a very hard problem.
There are only a few cases in the literature that have been solved,
such as additive Gaussian channels with memory without feedback (``water filling solution", \cite{Shannon49waterfilling,Pinsker60}),
additive Gaussian channels with feedback where the noise is ARMA of order 1 \cite{Kim06MA},
channels with memory where the state is known both to the encoder and the decoder \cite{goldsmith96capacity,chenberger},
and the trapdoor channel with feedback\cite{Permuter06_trapdoor_submit}.
This paper adds one additional case, the Ising channel.

Towards this goal, we start from the characterization of the feedback-capacity as the normalized directed information $\frac{1}{n}I(X^n\to Y^n)$.
The directed information was introduced two decades ago by Massey \cite{causalityfeedbackanddirectedinformation} (who attributed to Marko \cite{Marko73}) as
\begin{equation}
\label{e_def_directed}
I(X^n\to Y^n) = \sum_{i=1}^n I(X^i;Y_i|Y^{i-1}).
\end{equation}
Massey \cite{causalityfeedbackanddirectedinformation} showed that the normalized maximum directed information upper bounds the capacity of channels with feedback.
Subsequently, it was shown that directed information, as defined by Massey, indeed characterizes the capacity of channels with
feedback ~\cite{ Kramer03,Kim07feedback, tat, finitestatechannels, PermuterWeissmanChenMACIT09, ShraderPermuter09CompoundIT, DaborahGoldsmith10BCfeedback}.

The capacity of the Ising channel with feedback was approximated numerically \cite{iddoandhaim} using an extension of Blahut-Arimoto algorithm for directed information.
Here, we present the explicit expression together with a simple capacity-achieving coding scheme.
The main difficulty of calculating the feedback capacity explicitly is that it is given by an optimization of an infinite-letter expression.
In order to overcome this difficulty we transform the normalized directed information optimization problem into an infinite average-reward dynamic programming problem.
The idea of using dynamic programming (DP) for computing the directed information capacity has been introduced and applied in several recent papers such
as \cite{Young,chenberger,Permuter06_trapdoor_submit, tat}.
The DP used here most resembles the trapdoor channel model \cite{Permuter06_trapdoor_submit}.
We use a DP method that is specified for the Ising channel rather then the trapdoor channel, and provide an analytical solution for the new specific DP.

It turns out that the DP not only helps in computing the feedback capacity but also provides an important information regarding a coding scheme that achieves the capacity.
Through the DP formulation and through its solution we are able to derive a simple and concrete coding scheme that achieves the feedback capacity.
The states and the actions of the dynamic programming turn out to include exact instructions of what the encoder and the decoder do to achieve the feedback-capacity.

The remainder of the paper is organized as follows:
in Section \ref{sectwo} we present some notations which are used throughout the paper, basic definitions,
and the channel model.
In Section \ref{secthree} we present the main results.
In Section \ref{s_feedback_capacity_and_dynamic_programming} we present the outline of the
method used to calculate the channel capacity. In this section we explain shortly about DP and
about the Bellman Equation, which is used in order to find the capacity.
In Section \ref{secfive} we compute the feedback capacity using a value iteration algorithm.
In Section \ref{secsix} an analytical solution to the Bellman Equation is found.
Section \ref{seceight} contains the connection between the DP results and the coding scheme.
From this connection we can derive the coding scheme explicitly.
In section \ref{secseven} we prove that the suggested coding scheme indeed achieves the capacity.
Section \ref{secnine} contains conclusions and discussion of the results.

\section{Notations, definitions and channel model}
\label{sectwo}
\subsection{Notations}

\begin{itemize}
\item Calligraphic letters, $\mathcal{X}$, denote alphabet sets, upper-case letters, $X$, denote random variables, and lower-case letters, $x$,
denote sample values.
\item Superscript, $x^t$, denotes the vector $(x_1,\dots,x_t)$.
\item The probability distribution of a random variable, $X$, is denoted by $p_X$.
We omit the subscript of the random variable when the arguments have the same letter as the random variable, e.g. $p(x|y)=p_{X|Y}(x|y)$.
\end{itemize}
The notations related to the channel are presented in Table \ref{notations}: \\
\begin{table}[h]
\caption{Frequently used notations.} \centering
\label{notations}
\begin{tabular}[h]{||c|c||}
\hline \hline
Notation & Meaning  \\
\hline \hline
$t$ & Time $(\in \mathbb{N})$ \\
\hline
$x_t$ & Channel Input at time $t$ $(\in \mathcal{X})$ \\
\hline
$s_t$ & Channel State at time $t$ $(\in \mathcal{S})$ \\
\hline
$y_t$ & Channel Output at time $t$ $(\in \mathcal{Y})$ \\
\hline \hline
\end{tabular}
\end{table}

\subsection{Definitions}
Here we present some basic definitions beginning with a definition of a finite state channel (FSC).
\begin{definition}[FSC]
\cite[ch. 4]{Gallager68} An FSC is a channel that has a finite number of possible states
and has the property: $p(y_t,s_t|x^t,s^{t-1},y^{t-1})=p(y_t,s_t|x_t,s_{t-1})$.
\end{definition}
\begin{definition}[Unifilar FSC]
\cite[ch. 4]{Gallager68} An FSC is called a unifilar FSC if there exists a time-invariant function
$f(\cdot )$ such that $s_t=f(s_{t-1},x_t,y_t)$.
\end{definition}
\begin{definition}[Connected FSC]
\cite[ch. 4]{Gallager68} An FSC is called a connected FSC if
\[
\forall s,s'\in \mathcal{S} \quad \exists \quad T_s\in \mathbb{N} \text{ and } \{p(x_t|s_{t-1})\}_{t=1}^{T_s} \text{ such that } \sum_{t=1}^{T_s}p_{S_t|S_0}(s|s')>0.
\]
In other words, for any given states $s,s'$ there exists an integer $T_s$ and an input distribution $\{p(x_t|s_{t-1})\}_{t=1}^{T_s}$,
such that the probability of the channel to reach the state $s$ from the state $s'$ is positive.
\end{definition}
\subsection{Channel model}
\label{channelmodel}
In this part, the Ising channel model is introduced.
The channel is a unifilar FSC with feedback, as depicted in Fig. \ref{basechannel}.
\begin{figure}[h!]

\begin{center}
\begin{psfrags}
    \psfragscanon
    \psfrag{a}[][][0.9]{Message}
    \psfrag{b}[][][0.9]{$x_t(m,y^{t-1})$}
    \psfrag{c}[][][0.9]{$x_t$}
    \psfrag{d}[][][0.8]{$p(y_t|x_t,s_{t-1})$}
        \psfrag{n}[][][0.8]{$s_t=f(s_{t-1},x_t,y_t)$}
    \psfrag{e}[][][0.9]{$y_t$}
    \psfrag{f}[][][0.9]{$\hat{m}(y^N)$}
    \psfrag{g}[][][0.9]{Estimated message}
    \psfrag{h}[][][0.9]{$y_t$}
    \psfrag{i}[][][0.9]{Unit}
        \psfrag{p}[][][0.9]{Delay}
    \psfrag{j}[][][0.9]{$y_{t-1}$}
    \psfrag{k}[][][0.9]{Unifilar FSC}
    \psfrag{l}[][][0.9]{Encoder}
    \psfrag{m}[][][0.9]{Decoder}
        \psfrag{o}[][][0.9]{$\hat{m}$}
        \psfrag{q}[][][0.9]{$m$}

\includegraphics[width=18cm]{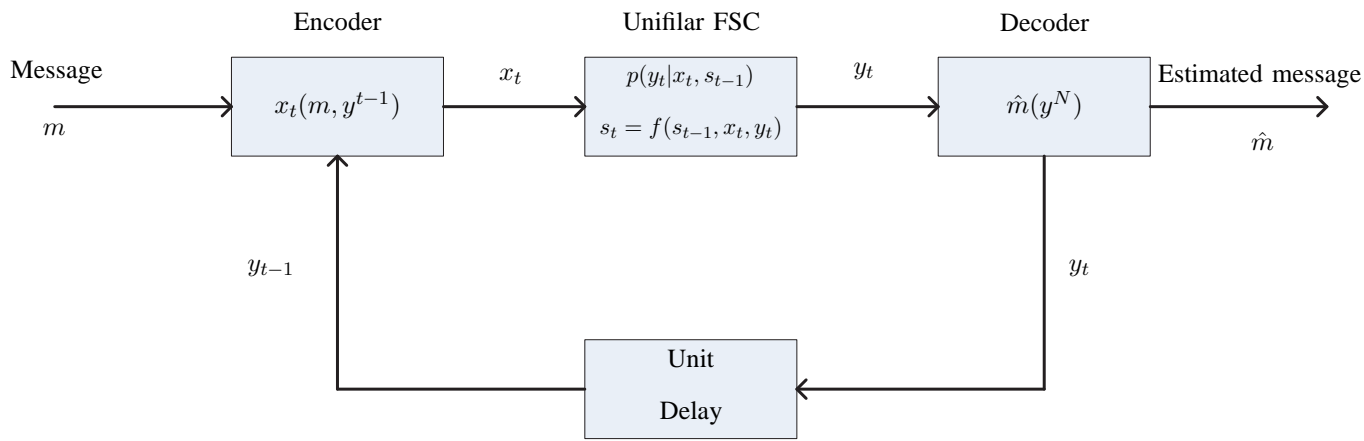}
\caption{Unifilar finite state channel with feedback of unit delay. }
\label{basechannel}
\psfragscanoff
\end{psfrags}
\end{center}
\end{figure}
As mentioned before, the sets $\mathcal{X},\mathcal{Y},\mathcal{S}$ denote the input, output, and state alphabet, respectively.
In the Ising channel model: $\mathcal{X}=\mathcal{Y}=\mathcal{S}=\{0,1\}$.

The Ising channel consists of two different topologies, as described in Fig. \ref{ising}.
The channel topologies depend on the channel state and are denoted by $Z$ and $S$.
These $Z$ and $S$ notations are compatible with the well known $Z$ and $S$ channels.
The channel topology at time $t$ is determined by $s_{t-1}\in\{0,1\}$.
\begin{figure}[h]

\begin{center}
\begin{psfrags}
    \psfragscanon
    \psfrag{d}[][][0.9]{$s_{t-1}=1$}
    \psfrag{a}[][][0.9]{$x_t$}
    \psfrag{s}[][][0.9]{$y_t$}
    \psfrag{x}[][][0.9]{$s_{t-1}=0$}
    \psfrag{z}[][][0.9]{$x_t$}
    \psfrag{c}[][][0.9]{$y_t$}

\includegraphics[width=9cm]{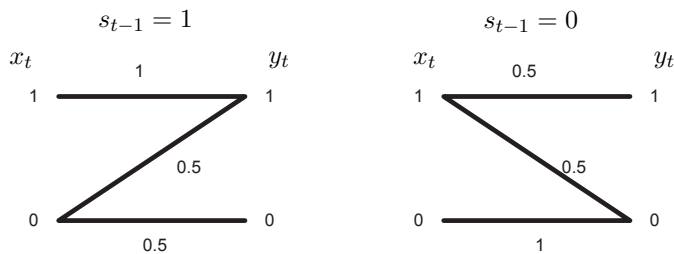}
\caption{The Ising channel model. On the left we have the $Z$ topology; and on the right we have the $S$ topology. }
\label{ising}
\psfragscanoff
\end{psfrags}
\end{center}
\end{figure}
As shown in Fig. \ref{ising}, if $s_{t-1}=1$, the channel is in the $Z$ topology; if $s_{t-1}=0$, the channel is in the $S$ topology.
The channel state at time $t$ is defined as the input to the channel at time $t$, meaning $s_t=x_{t}$.

The channel input, $x_t$, and state, $s_{t-1}$, have a crucial effect on the output, $y_t$.
If the input is identical to the previous state, i.e. $x_t=s_{t-1}$, then the output is equal to the input, $y_t=x_t$, with probability $1$;
if $x_t\neq s_{t-1}$ then $y_t$ can be either $0$ or $1$, each with probability $0.5$.
This effect is summarized in Table \ref{ta_prob_chem}. \\
\begin{table}[h]
\caption{The channel states, topologies, and inputs, together with the probability that the output at time $t$, $y_t$, is equal
to the input at time $t$, $x_t$.} \centering
\label{ta_prob_chem}
\begin{tabular}[h]{||c|c|c|c||}
\hline \hline
$s_{t-1}(=x_{t-1})$ & Topology & $x_{t}$ & $p(y_t=x_t|x_t,s_{t-1})$ \\
\hline \hline
0 & $S$ & 0 & 1 \\
\hline
0 & $S$ & 1 & $0.5$ \\
\hline
1 & $Z$ & 0 & $0.5$ \\
\hline
1 & $Z$ & 1 & 1 \\
\hline \hline
\end{tabular}
\end{table}

We assume a communication settings that includes feedback.
The feedback is with unity delay.
Hence, the transmitter (encoder) knows at time $t$ the message $m$ and the feedback samples $y^{t-1}$.
Therefore, the input of the channel, $x_t$, is a function of both the message and the feedback as shown in Fig. \ref{basechannel}.
\begin{lemma}
\label{basicprop}
The Ising channel is a connected unifilar FSC.
\end{lemma}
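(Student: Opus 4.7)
My plan is to verify the three properties separately, all of which follow directly from how the channel and state update are defined in Section~\ref{channelmodel}.

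First I would show that the channel is an FSC. The channel law in Table~\ref{ta_prob_chem} specifies $p(y_t\mid x_t,s_{t-1})$ as a function only of $(x_t,s_{t-1})$, and the state update rule $s_t=x_t$ depends deterministically on $x_t$ alone. Writing
\begin{equation*}
p(y_t,s_t\mid x^t,s^{t-1},y^{t-1}) = p(y_t\mid x^t,s^{t-1},y^{t-1})\,p(s_t\mid x^t,s^{t-1},y^{t-1},y_t),
\end{equation*}
I would observe that the first factor collapses to $p(y_t\mid x_t,s_{t-1})$ by the channel description, and the second factor equals $\mathbf{1}\{s_t=x_t\}$, which in particular depends only on $x_t$. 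Multiplying these yields the required form $p(y_t,s_t\mid x_t,s_{t-1})$.

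Next I would verify the unifilar property by exhibiting the function explicitly: take $f:\mathcal S\times\mathcal X\times\mathcal Y\to\mathcal S$ defined by $f(s,x,y)=x$. This is manifestly time-invariant, and since the model stipulates $s_t=x_t$, we have $s_t=f(s_{t-1},x_t,y_t)$ for every $t$, as required.

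Finally, for connectedness, given any pair $s,s'\in\{0,1\}$ I would take $T_s=1$ and use the deterministic input rule that sets $x_1=s$ with probability $1$ regardless of $s_0=s'$. Because $s_1=x_1$, this yields $p_{S_1\mid S_0}(s\mid s')=1>0$, so the definition is satisfied with a one-step transition in every case. The only ``obstacle'' is bookkeeping: checking that the chosen input distribution is a valid conditional $p(x_1\mid s_0)$, which it trivially is since it is a point mass. Combining the three parts gives the lemma.
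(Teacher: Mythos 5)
Your proposal is correct and follows essentially the same three-step route as the paper: verifying the FSC factorization from the channel law and $s_t=x_t$, exhibiting $f(s,x,y)=x$ for the unifilar property, and using a one-step point-mass input to establish connectedness. The only difference is that you spell out the conditional factorization in step one more explicitly than the paper does, which is a harmless elaboration.
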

\begin{proof}
Lemma \ref{basicprop} is proved in three steps. At each step we show a different property of the channel.
\begin{enumerate}[(a)]
\item
The channel is an FSC channel since it has two states, $0$ and $1$.
Moreover, the output probability is a function of the input and the previous state.
Hence, it is clear that $p(y_t,s_t|x^t,s^{t-1},y^{t-1})=p(y_t,s_t|x_t,s_{t-1})$ since $s_t=x_t$ and $y_t$ depends only on $x_t$ and $s_{t-1}$.
To be more accurate, we can write $p(y_t,s_t|x^t,s^{t-1},y^{t-1})=p(y_t,s_t|x_t,s_{t-1})=p(y_t,s_t|x_t,x_{t-1})$.
\item
The channel is a unifilar FSC since (a) it is an FSC and (b) $s_t=x_t$.
Obviously, $s_t=f(s_{t-1},x_t,y_t)=x_t$ is a time-invariant function.
\item
The channel is a connected FSC due to the fact that $s_t=x_t$.
Thus, one can take $T_s=1$ and $p_{X_t|S'}(s|s')=1$, resulting in $\Pr (S_1=s|S_0=s')=1>0$.
\end{enumerate}
\end{proof}
\section{Main Results}
\label{secthree}

\begin{theorem}
\label{main_result}
\begin{enumerate}[(a)]
\item
\label{main_result_a}
The capacity of the Ising channel with feedback is
$C_f=\left(\frac{2H(a)}{3+a} \right)\approx 0.5755$ where $a\approx 0.4503$ is a specific root of
the fourth-degree polynomial $x^4-5x^3+6x^2-4x+1$.

\item
\label{main_result_b}
The capacity, $C_f$, is also equal to $\max_{0\leq z\leq 1} \left(\frac{2H(z)}{3+z} \right)$
where $a=\arg \max_{0\leq z\leq 1} \left(\frac{2H(z)}{3+z} \right)\approx 0.4503$.
\end{enumerate}
\end{theorem}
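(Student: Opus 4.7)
The plan is to prove part (a) by converting the directed-information feedback-capacity formula into an infinite-horizon average-reward dynamic programming problem and solving the resulting Bellman equation, and to prove part (b) by a short calculus computation showing that the first-order optimality condition for $\frac{2H(z)}{3+z}$ is precisely the quartic appearing in part (a).

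For part (a), I would start from the fact, valid for any connected unifilar FSC (as proved in the references cited after equation \eqref{e_def_directed}), that
\[
C_f = \lim_{N \to \infty} \sup \frac{1}{N} I(X^N \to Y^N),
\]
where the supremum ranges over causally-conditioned input distributions $\prod_t p(x_t \mid s_{t-1}, y^{t-1})$. Since in the Ising channel $s_t = x_t$, the only sufficient statistic the encoder must track is the posterior belief $p_t = \Pr(S_{t-1} = 1 \mid y^{t-1})$. I would therefore set up a DP whose state is $p_t$, whose action is the pair of conditional probabilities $\Pr(X_t \ne s_{t-1} \mid s_{t-1})$, and whose per-stage reward is $I(X_t; Y_t \mid y^{t-1})$. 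Because the channel output is deterministic when $x_t = s_{t-1}$ and fair otherwise, the transition and reward kernels simplify drastically.

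Next I would perform a Bellman-verification: guess a value function $h(p)$ whose shape is suggested by the value-iteration experiments of Section \ref{secfive} and by the $0 \leftrightarrow 1$ symmetry of the channel, and a constant average reward $\rho$, and verify analytically that
\[
\rho + h(p) = \sup_{u} \left\{ r(p, u) + \mathbb{E}_{p,u}\bigl[h(p_{t+1})\bigr] \right\}.
\]
The optimal stationary policy then uses a single ``exploration'' probability $a$ on the randomizing branches, and matching coefficients forces $a$ to satisfy exactly one scalar equation — the quartic $x^4 - 5x^3 + 6x^2 - 4x + 1 = 0$ — with $\rho = \frac{2H(a)}{3+a}$. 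This Bellman-matching step is the main obstacle: guessing the correct ansatz for $h$, carrying out the algebra so that the maximum is attained in closed form, and identifying the correct root of the quartic (only one of the four roots lies in $(0,1)$ and yields a positive rate) all require care.

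For part (b) the computation is short. Using $H'(z) = \log\frac{1-z}{z}$, setting $\frac{d}{dz}\frac{2H(z)}{3+z}=0$ is equivalent to $(3+z)\log\frac{1-z}{z} = H(z)$; expanding $H(z) = -z\log z - (1-z)\log(1-z)$ and collecting terms gives $4\log(1-z) - 3\log z = 0$, i.e.,
\[
(1-z)^4 = z^3, \qquad \text{equivalently} \qquad z^4 - 5z^3 + 6z^2 - 4z + 1 = 0,
\]
which is precisely the quartic of part (a). Concavity of $\frac{2H(z)}{3+z}$ on $(0,1)$ — which follows from concavity of $H$ and positivity and monotonicity of $3+z$, and can also be verified directly via the second-order condition at the interior critical point — then shows that the unique root $a \approx 0.4503$ in $(0,1)$ is the global maximizer, matching the $a$ produced by the DP in part (a) and completing the identification $C_f = \max_{0 \le z \le 1} \frac{2H(z)}{3+z}$.
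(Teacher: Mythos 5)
Your overall strategy coincides with the paper's: part (a) via a belief-state dynamic program and a Bellman-equation verification, part (b) via the first-order condition for $\frac{2H(z)}{3+z}$. For part (b) your computation is correct and in fact slightly cleaner than the paper's: you reduce directly to $(1-z)^4=z^3$, i.e.\ $z^4-5z^3+6z^2-4z+1=0$, whereas the paper works with the squared condition $(1-z)^8-z^6=0$ and then factors a degree-eight polynomial. One caveat: your claim that concavity of $\frac{2H(z)}{3+z}$ ``follows from concavity of $H$ and positivity and monotonicity of $3+z$'' is not a valid inference --- a concave function divided by a positive affine function need not be concave (e.g.\ $1/(3+z)$ is convex). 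What you actually need, and what the paper uses, is that the numerator of the derivative, $8\log_2(1-z)-6\log_2 z$, is strictly decreasing, so the derivative changes sign exactly once on $(0,1)$ and the unique critical point is the global maximizer.

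The genuine gap is in part (a): you describe the Bellman-verification programme but do not carry out any of its substantive steps, and you say so yourself (``the main obstacle \ldots all require care''). A complete proof requires (i) an explicit ansatz for the relative value function --- in the paper, $h^*(z)=H_b(z)$ on the middle interval $[\frac{1-a}{1+a},\frac{2a}{1+a}]$ and the expression in (\ref{thefunction}) on the outer intervals, obtained by applying the fixed-policy operator twice; (ii) an explicit stationary policy, here the piecewise-linear $\gamma^*,\delta^*$ of (\ref{gamma})--(\ref{delta}); (iii) verification that this pair satisfies $T_{\delta^*,\gamma^*}h^*=h^*+\rho^*$ with $\rho^*=\frac{2H(a)}{3+a}$ (Lemmas \ref{mid_section_h} and \ref{end-section-h}); and (iv) a proof that the chosen actions actually attain the supremum in $Th^*$, which the paper obtains by showing $h^*$ is concave via a concatenation lemma, deducing concavity of the restricted operator in $(\delta,\gamma)$, and checking the KKT conditions over the constraint set $0\le\delta\le z$, $0\le\gamma\le 1-z$ (Lemma \ref{mean1}). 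Step (iv) is where a bare ``match coefficients'' plan stalls: the quartic emerges from the stationarity condition $\frac{2\rho^*+\log_2 a}{a-1}=0$ at the boundary states, and without the concavity/KKT argument there is no guarantee that this critical point is a constrained maximum, so Theorem \ref{Bellman} cannot be invoked. As written, your proposal establishes part (b) (modulo the concavity remark) but only sketches part (a).
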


\begin{theorem}
\label{main_result_code}
There is a simple capacity-achieving coding scheme, which follows these rules:
\begin{enumerate}[(i)]
\item Assume the message is a stream of $n$ bits distributed i.i.d. with probability $0.5$.
\item Transform the message into a stream of bits where the probability of alternation between $0$ to $1$ and vice versa is $(1-a)$, where $a\approx 0.4503$.
\item Since the encoder may send some bits twice, the channel output at time $t$ does not necessarily corresponds to the $t$th message bit, therefore, we are working with two time frames.
The message time frame is denoted in $t$ while the encoder's and the decoder's time frame is denoted in $t'$.
In other words, we denote the $t$th message bit in $m_t$ where it corresponds to the $t'$th encoder's entry.
Following these rules:
\begin{enumerate}[(1)]
\item
\emph{Encoder:} At time $t'$, the encoder knows $s_{t'-1}=x_{t'-1}$, and we send the bit $m_t$ $(x_{t'}=m_t)$:
\begin{enumerate}[({1.}1)]
\item If $y_{t'}\neq s_{t'-1}$ then move to the next bit, $m_{t+1}$. This means that we send $m_t$ once.
\item If $y_{t'}=s_{t'-1}$ then $x_{t'}=x_{t'+1}=m_t$, which means that the encoder sends $m_t$ twice (at time $t'$ and $t'+1$) and then move to the next bit.
\end{enumerate}
\item
\emph{Decoder:} At time $t'$, assume the state $s_{t'-1}$ is known at the decoder, and we are to decode the bit $\hat{m}_t$:
\begin{enumerate}[({2.}1)]
\item If $y_{t'}\neq s_{t'-1}$ then $\hat{m}_t=y_{t'}$ and $s_{t'}=y_{t'}$.
\item If $y_{t'}=s_{t'-1}$ then wait for $y_{t'+1}$. $\hat{m}_t=y_{t'+1}$ and $s_{t'}=y_{t'+1}$.
\end{enumerate}
\end{enumerate}
\end{enumerate}
\end{theorem}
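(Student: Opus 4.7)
The plan is to separate the analysis of the scheme into (i) a zero-error decoding argument, (ii) a renewal/law-of-large-numbers counting of channel uses, and (iii) a source-coding step that converts i.i.d.\ uniform bits into the required Markov stream, and then to combine these to recover the capacity $C_f=2H_b(a)/(3+a)$ of Theorem~\ref{main_result}.

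\textbf{Zero-error decoding.} I will prove by induction on $t'$ that at the start of each block transmitting some $m_t$ both parties know $s_{t'-1}=x_{t'-1}$ and that the decoder recovers $m_t$ with probability one. Using Table~\ref{ta_prob_chem}, the event $y_{t'}\neq s_{t'-1}$ forces $x_{t'}\neq s_{t'-1}$ together with $y_{t'}=x_{t'}=m_t$, which matches decoder rule~(2.1) exactly. The complementary event $y_{t'}=s_{t'-1}$ occurs either when $m_t=s_{t'-1}$ (deterministically) or when $m_t\neq s_{t'-1}$ and the random output coincides with $s_{t'-1}$ (with probability $\tfrac12$); in both sub-cases the encoder retransmits at time $t'+1$ with $x_{t'+1}=s_{t'}=m_t$, which forces $y_{t'+1}=m_t$ deterministically, matching decoder rule~(2.2). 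Hence every $m_t$ is decoded without error and $s_{t'}=m_t$ is known at the decoder as the next block begins, closing the induction.

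\textbf{Counting channel uses and the rate.} Let $N_t$ be the number of channel uses consumed by $m_t$, and let $a$ be the no-alternation probability of the transformed stream (so $s_{t'-1}=m_{t-1}$ upon entering the $t$-th block). The case analysis above gives $N_t=1$ with probability $(1-a)/2$ and $N_t=2$ with probability $a+(1-a)/2$, so $\mathbb{E}[N_t]=(3+a)/2$. Because the transformed stream is a stationary ergodic Markov chain, the strong law of large numbers for Markov chains yields $\frac{1}{K}\sum_{t=1}^K N_t\to (3+a)/2$ almost surely. For step~(ii) of the scheme I will invoke a standard source-coding bijection (enumerative or arithmetic coding) that maps $n$ uniform message bits into a length-$K$ sample path of the symmetric Markov source with alternation probability $1-a$, where $K=n/H_b(a)+o(n)$ on the typical set. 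The total number of channel uses is therefore $(3+a)n/(2H_b(a))+o(n)$ with probability tending to one, giving an achievable rate of $2H_b(a)/(3+a)=C_f$; combined with the zero-error property this proves Theorem~\ref{main_result_code}.

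\textbf{Main obstacle.} The combinatorial case analysis and the renewal argument are routine. The subtle point is the source transformation of step~(ii): one must construct a concrete (near-)bijection from uniform $\{0,1\}^n$ onto length-$K$ paths of the target Markov source such that (a) the consumed rate matches $H_b(a)$ up to $o(n)$ bits, and (b) the conditional alternation statistics that feed the encoder do not drift away from the Markov target, so that the computation of $\mathbb{E}[N_t]$ above applies in the limit. Handling the atypical set cleanly, for instance by a typicality-based fallback and a sub-linear padding of extra bits, is where I expect the proof to require the most care.
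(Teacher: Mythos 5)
Your proposal is correct and follows essentially the same route as the paper: a two-stage encoder (a source transformation from uniform bits to a Markov stream with alternation probability $1-a$ via enumerative/typicality coding, followed by the send-once-or-twice channel encoder), the same per-bit channel-use accounting giving $\mathbb{E}[N_t]=(3+a)/2$, and the same division yielding the rate $\tfrac{2H_b(a)}{3+a}=C_f$. Your explicit induction for zero-error decoding and the law-of-large-numbers concentration step are slightly more careful than the paper's expected-length calculation and worked example, but they are refinements of the identical argument rather than a different approach.
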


\section{Method Outline, Dynamic Program, and the Bellman Equation}
\label{s_feedback_capacity_and_dynamic_programming}
In order to formulate an equivalent dynamic program we use the following theorem:

\begin{theorem}
\cite{Permuter06_trapdoor_submit}
\label{t_feedback_capacity_unifilar_FSC1}
The feedback capacity of a connected unifilar FSC, when
initial state $s_0$ is known at the encoder and the decoder can be
expressed as

\begin{equation}
\label{FBcapacity}
C_{FB}=
\sup_{\{p(x_{t}|s_{t-1},y^{t-1})\}_{t\geq 1}} \liminf_{N
\rightarrow \infty} \frac{1}{N}\sum_{t=1}^{N}
I(X_t,S_{t-1};Y_t|Y^{t-1})
\end{equation}
where $\{p(x_{t}|s_{t-1},y^{t-1})\}_{t\geq 1}$ denotes the set
of all distributions such that
$p(x_t|y^{t-1},x^{t-1},s^{t-1})=p(x_t|s_{t-1},y^{t-1})$ for
$t=1,2,...$ .
\end{theorem}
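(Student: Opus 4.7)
The plan is to leverage the directed-information characterization of feedback capacity for channels with memory and then exploit the unifilar and connectedness properties to single-letterize the mutual-information terms and to restrict the class of admissible input policies.

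First, I would start from the known directed-information characterization of feedback capacity for FSCs: for a stationary channel with feedback,
$$C_{FB} = \lim_{N\to\infty}\, \sup\, \frac{1}{N} I(X^N \to Y^N),$$
where the supremum is over all causally-conditioned input distributions $p(x_t \mid x^{t-1}, y^{t-1})$, and by Massey's chain rule $I(X^N \to Y^N) = \sum_{t=1}^N I(X^t; Y_t \mid Y^{t-1})$. The goal is to rewrite the per-letter term using the state and then argue that the input policy need only depend on $(s_{t-1}, y^{t-1})$.

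Second, I would rewrite each summand as $I(X^t; Y_t \mid Y^{t-1}) = I(X_t, S_{t-1}; Y_t \mid Y^{t-1})$. The key observation is that under the unifilar property the state $S_{t-1}$ is a deterministic function of $(s_0, X^{t-1}, Y^{t-1})$; since $s_0$ is known to both terminals, conditioning on $(X^{t-1}, Y^{t-1})$ automatically fixes $S_{t-1}$. Combining this with the FSC Markov property $p(y_t \mid x_t, s_{t-1}, x^{t-1}, y^{t-1}) = p(y_t \mid x_t, s_{t-1})$ collapses the dependence on the full past to the sufficient pair $(X_t, S_{t-1})$ and yields the stated equality.

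Third, I would argue that the supremum over all causal input distributions is attained within the smaller class $\{p(x_t \mid s_{t-1}, y^{t-1})\}$. The joint distribution of $(S_{t-1}, Y_t)$ given $Y^{t-1}$ depends on the choice of $p(x_t \mid x^{t-1}, y^{t-1})$ only through the marginal $p(x_t \mid s_{t-1}, y^{t-1})$ it induces, so a concavity/sufficient-statistic argument shows that replacing the original policy by the marginal leaves the per-letter mutual information unchanged.

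Finally, the connectedness hypothesis is invoked to show that the resulting $\liminf$ equals the corresponding $\limsup$ and is independent of the initial state $s_0$, since any state can be driven to any other in bounded expected time, allowing a coupling/Cesaro averaging argument. The main technical obstacle I anticipate is the converse: one must (a) deduce from Fano's inequality applied to a feedback code that the rate is upper bounded by $\frac{1}{N} I(X^N \to Y^N)$ for some admissible causal policy, and (b) verify that restricting to the policies $p(x_t \mid s_{t-1}, y^{t-1})$ loses no generality at the level of the $\liminf$, which requires the sufficiency argument above together with a uniform bound that lets the $\liminf$ interchange with the supremum without slack.
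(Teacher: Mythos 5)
This theorem is not proved in the paper you were given: it is imported by citation from the trapdoor-channel work \cite{Permuter06_trapdoor_submit}, so there is no in-paper proof to compare your attempt against. Measured against the proof in that cited source, your outline follows the same route and the individual ingredients are correct: the identity $I(X^t;Y_t|Y^{t-1})=I(X_t,S_{t-1};Y_t|Y^{t-1})$ does follow from $S_{t-1}$ being a deterministic function of $(s_0,X^{t-1},Y^{t-1})$ (unifilarity plus known $s_0$) together with the FSC Markov property killing $I(X^{t-1};Y_t|X_t,S_{t-1},Y^{t-1})$; and the sufficiency of policies $p(x_t|s_{t-1},y^{t-1})$ does follow from the factorization $p(x_t,s_{t-1},y_t|y^{t-1})=p(s_{t-1}|y^{t-1})\,p(x_t|s_{t-1},y^{t-1})\,p(y_t|x_t,s_{t-1})$ and an induction showing $p(s_{t-1}|y^{t-1})$ depends on the past policy only through its induced marginals of the same form.

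The genuine gap is the one you yourself flag at the end and then leave unresolved. Your starting point is a characterization with the supremum \emph{inside} the limit (a sequence of finite-horizon optimizations), whereas the theorem asserts $\sup$ over a single infinite-horizon policy of a $\liminf$; passing between these is not a bookkeeping step. It requires (i) the achievability direction --- constructing, from the per-letter expression, an actual code whose rate approaches $\sup\liminf$, which in \cite{Permuter06_trapdoor_submit} is done via a random-coding argument over causally conditioned distributions, and (ii) using connectedness not merely for initial-state independence but to concatenate near-optimal finite-horizon policies into one infinite-horizon policy without rate loss, so that $\liminf$ can replace $\lim$ and the order of $\sup$ and limit can be exchanged. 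Naming these as ``technical obstacles'' is honest, but as written your proposal is a correct plan rather than a proof: the two places where all the work lives are precisely the two you defer.
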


Using Theorem \ref{t_feedback_capacity_unifilar_FSC1}
we can formulate the feedback capacity problem as an infinite-horizon average-reward dynamic program.
Then, using the Bellman Equation we find the optimal average reward which gives us the channel capacity.

\subsection{Dynamic Programs}

Here we introduce a formulation for average-reward dynamic programs.
Each DP problem is defined by a septuple $(\mathcal{Z}, \mathcal{U}, \mathcal{W}, F, P_z, P_w, g)$,
where all the functions considered are assumed to be measurable:

\begin{itemize}
\item [$\mathcal {Z}$]: is a Borel space which contains the \emph {states}.
\item [$F$]: is the function for which the discrete-time dynamic system evolves according to.
$z_t=F(z_{t-1},u_t,w_t), t\in \mathbb{N}^{\geq 1}$ (where each state $z_t \in \mathcal{Z}$).
\item  [$\mathcal {U}$]: is a compact subset of a Borel space, which contains the \emph {actions} $u_t$.
\item  [$\mathcal {W}$]: is a measurable space, which contains the \emph {disturbances} $w_t$.
\item  [$P_z$]: is a distribution from which the initial state, $z_0$, is drawn.
\item  [$P_w$]: is the distribution for which each disturbance, $w_t$, is drawn in accordance with $P_w(\cdot|z_{t-1},u_t)$,
which depends only on the state $z_{t-1}$ and action $u_t$.
\item  [$g$]: is a bounded reward function.
\end{itemize}
%

We consider a discrete-time dynamic system evolving according to
\begin{equation}
\label{e_dp_state_evolution}
z_t = F(z_{t-1},u_t,w_t), \quad t=1,2,3,\ldots.
\end{equation}
The \textit{history}, $h_t = (z_0, w_0, \ldots, w_{t-1})$, summarizes information available prior to the selection of the $t$th action.
The action, $u_t$, is selected by a function, $\mu_t$, which maps histories to actions.
In particular, given a policy, $\pi =\{ \mu_1, \mu_2, \ldots \}$, actions are generated according to $u_t = \mu_t(h_t)$.

The objective is to maximize the average reward, given a bounded reward function $g: \cal{Z} \times \cal{U} \rightarrow \mathbb{R}$.
The average reward for a policy $\pi $ is defined by
\begin{equation}
\label{average_reward}
\rho_{\pi} = \liminf_{N \to \infty} \frac{1}{N} \mathbb{E}_{\pi}\left\{\sum_{t=0}^{N-1}
g(Z_t, \mu_{t+1}(h_{t+1})) \right\},
\end{equation}
where the subscript $\pi$ indicates that actions are generated by the policy $\pi =(\mu_1,\mu_2,\ldots)$.
The optimal average reward is defined by $\rho^* = \sup_\pi \rho_\pi$.

\subsection{The Bellman Equation}

An alternative characterization of the optimal average reward is offered by the Bellman Equation.
This equation verifies that a given average reward is optimal.
The result presented here encapsulates the Bellman Equation and can be found in \cite{Arapos93_average_cose_survey}.

\begin{theorem}
\label{Bellman}
\cite{Arapos93_average_cose_survey} If $\rho \in \mathbb{R}$ and a bounded function
$h:\cal{Z} \mapsto \mathbb{R}$ satisfy
\begin{equation}
\label{eq:Bellman}
\rho + h(z) = \sup_{u \in {\cal U}}
\left(g(z,u) + \int P_w(dw | z,u) h(F(z,u,w))\right) \: \: \forall
z\in{\cal Z}
\end{equation}
then $\rho=\rho^*$.  Furthermore, if there is a function $\mu:{\cal
Z}\mapsto{\cal U}$ such that $\mu(z)$ attains the supremum for
each $z$ and satisfies (\ref{Bellman}), then $\rho_\pi = \rho^*$ for $\pi = (\mu_0,
\mu_1,\ldots)$ with $\mu_t(h_t) = \mu(z_{t-1})$ for each $t$.
\end{theorem}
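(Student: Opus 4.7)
The plan is to prove this by the standard verification argument for average-reward dynamic programs, treating the two directions (upper bound on $\rho^*$ and achievability by the stationary policy $\mu$) separately.

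First, I would establish $\rho \geq \rho^*$. Fix an arbitrary policy $\pi = (\mu_1, \mu_2, \ldots)$ and let $z_0, z_1, \ldots$ be the resulting state trajectory with actions $u_t = \mu_t(h_t)$. For each $t$, the Bellman equation (\ref{eq:Bellman}) gives the pointwise inequality
\begin{equation*}
\rho + h(z_{t-1}) \;\geq\; g(z_{t-1}, u_t) + \int P_w(dw \mid z_{t-1}, u_t)\, h(F(z_{t-1}, u_t, w)),
\end{equation*}
since the right-hand side of (\ref{eq:Bellman}) is a supremum over $u$. The integral equals $\mathbb{E}_\pi[h(Z_t) \mid z_{t-1}, u_t]$ by construction of the state-evolution $Z_t = F(Z_{t-1}, u_t, W_t)$. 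Taking the $\mathbb{E}_\pi$ expectation and summing from $t=1$ to $N$, the $h(Z_{t-1})$ terms telescope and I obtain
\begin{equation*}
N\rho + \mathbb{E}_\pi[h(Z_0)] \;\geq\; \sum_{t=1}^{N} \mathbb{E}_\pi[g(Z_{t-1}, \mu_t(h_t))] + \mathbb{E}_\pi[h(Z_N)].
\end{equation*}
Dividing by $N$ and taking $\liminf_{N\to\infty}$, the boundary terms $\mathbb{E}_\pi[h(Z_0)]/N$ and $\mathbb{E}_\pi[h(Z_N)]/N$ both vanish because $h$ is bounded. This yields $\rho \geq \rho_\pi$, and maximizing over $\pi$ gives $\rho \geq \rho^*$.

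For the reverse direction, suppose $\mu: \mathcal{Z} \to \mathcal{U}$ attains the supremum in (\ref{eq:Bellman}) for every $z$. Consider the stationary policy $\pi^\star$ defined by $\mu_t(h_t) = \mu(z_{t-1})$. Repeating the previous computation with this policy, each inequality in the telescoping sum becomes an equality, so
\begin{equation*}
N\rho + \mathbb{E}_{\pi^\star}[h(Z_0)] \;=\; \sum_{t=1}^N \mathbb{E}_{\pi^\star}[g(Z_{t-1}, \mu(Z_{t-1}))] + \mathbb{E}_{\pi^\star}[h(Z_N)].
\end{equation*}
Again dividing by $N$ and sending $N \to \infty$, the boundary terms vanish and I conclude $\rho = \rho_{\pi^\star}$. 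Combined with $\rho \geq \rho^*$ from the first part and $\rho_{\pi^\star} \leq \rho^*$ from the definition of $\rho^*$, we get $\rho = \rho^* = \rho_{\pi^\star}$.

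The main technical obstacle is ensuring that the telescoping step is rigorous: one must verify measurability of $\mu$ to guarantee the stationary policy is admissible, and one must use the boundedness of $h$ (and boundedness of $g$, already assumed in the DP data) to legitimately discard the $\mathbb{E}[h(Z_N)]/N$ and $\mathbb{E}[h(Z_0)]/N$ terms under the $\liminf$. A secondary subtlety is that in the first direction we need the sup in (\ref{eq:Bellman}) only to dominate $g(z,u) + \mathbb{E}[h(F(z,u,W))]$ for each individual admissible $u$, which is immediate; but in the second direction we need an actual \emph{attainer} $\mu$, which is precisely the hypothesis of the second clause of the theorem. Since this is the classical verification result from \cite{Arapos93_average_cose_survey}, a reference suffices, but the sketch above makes the argument self-contained.
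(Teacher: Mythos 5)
The paper does not prove this theorem at all; it is quoted verbatim from the cited survey \cite{Arapos93_average_cose_survey}, so your verification argument is being compared against a citation rather than an in-paper proof. Your telescoping computation is the standard and correct one: the pointwise domination by the supremum, the tower-property step identifying $\int P_w(dw\mid z_{t-1},u_t)\,h(F(z_{t-1},u_t,w))$ with $\mathbb{E}_\pi[h(Z_t)\mid h_t]$, and the discarding of the $O(1/N)$ boundary terms using boundedness of $h$ are all sound, and they correctly deliver $\rho \geq \rho_\pi$ for every admissible $\pi$, hence $\rho \geq \rho^*$, as well as $\rho = \rho_{\pi^\star}$ when an attainer $\mu$ exists.

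There is, however, a genuine gap against the statement as written. The first clause of the theorem asserts $\rho = \rho^*$ from the mere existence of a bounded solution $(\rho,h)$ of the Bellman equation, \emph{without} assuming the supremum is attained; the attainment hypothesis enters only in the ``furthermore'' clause. Your argument establishes the equality $\rho=\rho^*$ only by routing through the attainer: you derive $\rho \leq \rho^*$ solely from $\rho = \rho_{\pi^\star} \leq \rho^*$. If no maximizing $\mu$ exists, your proof yields only the one-sided bound $\rho \geq \rho^*$. The standard repair is an $\varepsilon$-optimal selection: for each $\varepsilon>0$ choose a (measurable) $\mu_\varepsilon:\mathcal{Z}\to\mathcal{U}$ with
\begin{equation*}
g(z,\mu_\varepsilon(z)) + \int P_w(dw\mid z,\mu_\varepsilon(z))\,h(F(z,\mu_\varepsilon(z),w)) \;\geq\; \rho + h(z) - \varepsilon \quad \forall z,
\end{equation*}
run the same telescoping with the induced stationary policy to get $\rho_{\pi_\varepsilon} \geq \rho - \varepsilon$, and let $\varepsilon \downarrow 0$ to conclude $\rho^* \geq \rho$. (The existence of such a measurable selector is exactly where the Borel-space and compactness assumptions on $\mathcal{U}$ in the DP setup are used, so this is not a vacuous technicality.) With that one addition your proof is complete and matches the classical argument in the cited survey.
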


\noindent It is convenient to define a DP operator $T$ by
\begin{equation}
\label{DPoperator}
(Th)(z) = \sup_{u \in {\cal U}} \left(g(z,u) + \int P_w(dw | z,u) h(F(z,u,w))\right),
\end{equation}
for all functions $h$.
Thus, the Bellman Equation can be written as $\rho \textbf{1} + h = Th$.
We also denote in $T_{\mu}h$ the DP operator restricted to the policy, $\mu$.

\section{Computing the feedback capacity}
\label{secfive}
%
%
\subsection{Dynamic Programming Formulation}

In this section we associate the DP problem, which was discussed in the previous section, with the Ising channel.
Using the notations previously defined, the state, $z_{t}$, would be the vector of channel state probabilities $[p(s_{t}=0|y^{t}),p(s_{t}=1|y^{t})]$.
In order to simplify notations, we consider the state $z_{t}$ to be the first component; that is, $z_{t}:= p(s_{t}=0|y^{t})$.
This comes with no loss of generality, since $p(s_{t-1}=0|y^{t-1})+p(s_{t-1}=1|y^{t-1})=1$.
Hence, the second component can be derived from the first, since the pair sums to one.
The action, $u_t$, is a $2\times 2$ stochastic matrix
\begin{equation}
u_t=\left[
\begin{array}
{cc} p(x_t=0|s_{t-1}=0) & p(x_t=1|s_{t-1}=0) \\
p(x_t=0|s_{t-1}=1) &p(x_t=1|s_{t-1}=1)
\end{array}
\right].
\end{equation}
The disturbance, $w_t$, is the channel output, $y_t$.
The DP-Ising channel association is presented in Table \ref{relations}.
\begin{table}[h]
\caption{The Ising channel model notations Vs. Dynamic Programming notations.} \centering
\label{relations}
\begin{tabular}[h]{||c|c||}
\hline \hline
Ising channel notations & Dynamic Programming notations \\
\hline \hline
$p(s_t=0|y^t)$, probability of the channel & $z_t$, the DP state \\
state to be $0$ given the output & \\
\hline
$y_t$, the channel output & $w_t$, the DP disturbance \\
\hline
$p(x_t|s_{t-1})$, channel input probability & $u_t$, the DP action \\
given the channel state at time $t-1$ & \\
\hline
Eq. (\ref {e_beta_iteration}) & $z_t=F(z_{t-1},u_{t-1},w_{t-1})$, the DP state \\
 & evolves according to a function $F$\\
\hline
$I(X_t,S_{t-1};Y_t|y^{t-1})$ & $g(z_{t-1},u_t)$, the DP reward function \\
\hline \hline
\end{tabular}
\end{table}

Note that given a policy $\pi = (\mu_1, \mu_2, \ldots)$, $p(s_t|y^t)$ is given in (\ref {e_beta_iteration})
(as shown in \cite{Permuter06_trapdoor_submit}),
where $\mathbf{1}(\cdot)$ is the indicator function.
The distribution of the disturbance, $w_t$, is
$p(w_t|z^{t-1},w^{t-1},u^t) = p(w_t | z_{t-1},u_t)$. Conditional
independence from $z^{t-2}$ and $w^{t-1}$ given $z_{t-1}$ is due
to the fact that the channel output is determined by the channel state and input.

\begin{figure*}[b]
\line(1,0){515}
\begin{center}
\begin{eqnarray}\label{e_beta_iteration}
p(s_t|y^t)
&=&\frac{\sum_{x_{t},s_{t-1}} p(s_{t-1}|y^{t-1}) u_t(s_{t-1},
x_t) p(y_{t}|s_{t-1},x_{t})\mathbf{1}(s_t =
f(s_{t-1},x_{t},y_{t}))} {\sum_{x_{t},s_{t},s_{t-1}}
p(s_{t-1}|y^{t-1}) u_t(s_{t-1},x_t)
p(y_{t}|s_{t-1},x_{t})\mathbf{1}(s_t = f(s_{t-1},x_{t},y_{t}))},
\end{eqnarray}
\end{center}
\end{figure*}

The state evolves according to $z_t=F(z_{t-1},u_t,w_t)$,
where we obtain the function $F$ explicitly using relations
from (\ref {e_beta_iteration}) as follows:
\begin{equation}
\label{states}
z_t=\left\{ \begin{array}{ll}
\frac{z_{t-1}u_t(1,1)+0.5(1-z_{t-1})u_t(2,1)}{z_{t-1}u_t(1,1)+0.5z_{t-1}u_t(1,2)+0.5(1-z_{t-1})u_t(2,1)}, \text{if } w_t=0\\
\frac{0.5(1-z_{t-1})u_t(2,1)}{0.5z_{t-1}u_t(1,2)+0.5(1-z_{t-1})u_t(2,1)+0.5(1-z_{t-1})u_t(2,2)}, \text{if } w_t=1.
\end{array} \right.
\end{equation}
These expressions can be simplified by defining
\begin{equation}
\label{22}
\begin{array}{ll}
\gamma_t:=(1-z_{t-1})u_t(2,2)\\
\delta_t:=z_{t-1}u_t(1,1),
\end{array}
\end{equation}
and, using the fact that $u_t(1,1)=1-u_t(1,2), u_t(2,2)=1-u_t(2,1)$, we have that
\begin{equation}
\label{11}
z_t=\left\{ \begin{array}{ll}
1+\frac{\delta_t -z_{t-1}}{1+\delta_t -\gamma_t},& \text{if } w_t=0\\
\frac{1-z_{t-1}-\gamma_t}{1+\gamma_t -\delta_t},& \text{if } w_t=1.
\end{array} \right.
\end{equation}

Note that $\gamma_t,\delta_t$ are functions of $z_{t-1}$.
As shown in (\ref{22}), given $z_{t-1}$, the action $u_t$ defines the pair $(\gamma_t,\delta_t)$ and vice versa.
From here on, we represent the actions in terms of $\gamma_t$ and $\delta_t$.
Since $u_t$ is a stochastic matrix, we have the constraints
$0\leq \delta_t \leq z_t$ and $0 \leq \gamma_t \leq 1-z_t$.

After finding the channel state probability, we can formulate the DP operator (\ref{DPoperator}).
We consider the reward to be $g(z_{t-1},u_t)=I(X_t,S_{t-1};Y_t|y^{t-1})$.
Note that if $g(z_{t-1},u_t)=I(X_t,S_{t-1};Y_t|y^{t-1})$ then using Theorem \ref{t_feedback_capacity_unifilar_FSC1} we have that
\begin{equation}
\rho^* =\sup_{\pi} \liminf_{N \to \infty} \frac{1}{N} \mathbb{E}_{\pi}\left\{\sum_{t=1}^{N}
I(X_t,S_{t-1};Y_t|y^{t-1})) \right\}=C_{FB}.
\end{equation}
First, we show that the reward function $I(X_t,S_{t-1};Y_t|y^{t-1})$ is indeed a function of $u_t$ and $z_{t-1}$.
To show this we note that
\begin{eqnarray}
\label{imp}
p(x_t,s_{t-1},y_t|y^{t-1})&=&p(s_{t-1}|y^{t-1})p(x_t|s_{t-1},y^{t-1})p(y_t|x_t,s_{t-1}).
\end{eqnarray}
Recall that $p(y_t|x_t,s_{t-1})$ is given by the channel model.
Thus, the reward is dependent only on $p(s_{t-1}|y^{t-1})$ and $p(x_t|s_{t-1},y^{t-1})=u_t$.
Since $p(s_{t-1}|y^{t-1})$ is given by $z_{t-1}$, we have that
\begin{eqnarray}
g(z_{t-1},u_t)&=&I(X_t,S_{t-1};Y_t|y^{t-1}) 
\end{eqnarray}
is indeed a function of $u_t$ and $z_{t-1}$.
Now we find $g(z_{t-1},u_t)$ explicitly for the Ising channel:
\begin{eqnarray}
\label{reward}
I(X_t,S_{t-1};Y_t|y^{t-1})
&=& H_b\left( Y_t|y^{t-1}\right) - H_b\left( X_t,S_{t-1}|Y_t,y^{t-1}\right)\\
&\stackrel{(a)}{=}& H_b \left( z_{t-1}u_t(1,1)+\frac{z_{t-1}u_t(1,2)}{2}+\frac{z_{t-1}u_t(2,1)}{2}\right) \\
&& \quad -(z_{t-1}u_t(1,2)\cdot 1+(1-z_{t-1})u_t(2,1)\cdot 1) \nonumber \\
&\stackrel{(b)}{=}& H_b\left(\frac{1}{2}+\frac{\delta_t-\gamma_t}{2}\right) +\delta_t+\gamma_t-1.
\end{eqnarray}
Where $H_b(\cdot)$ denotes the binary entropy function.
$(a)$ follows from Table \ref{tabone} where the conditional distribution $p(x_t,s_{t-1},y_t|y^{t-1})$ is calculated using (\ref{imp}) and
$(b)$ follows from the definition of $\delta$ and $\gamma$ given in (\ref{22}) and the fact that $u_t$ is a stochastic matrix.
Therefore, using (\ref{reward}) and (\ref{DPoperator}), we write the DP operator explicitly:
\begin{align}
\label{DP_operator}
(Th)(z)&= \sup_{u \in {\cal U}} \left(g(z,u) + \int P_w(dw | z,u) h(F(z,u,w))\right) \\
&= \sup_{u \in {\cal U}} \left(H_b\left(\frac{1}{2}+\frac{\delta_t-\gamma_t}{2}\right) +\delta_t+\gamma_t-1 + \int P_w(dw | z,u) h(F(z,u,w))\right)\\
&\stackrel{(a)}{=} \sup_{0\leq \delta \leq z,0\leq \gamma \leq 1-z} H_b \left( \frac{1}{2}+\frac{\delta-\gamma}{2}\right)
+ \delta +\gamma -1 +\frac{1+\delta-\gamma}{2}h\left( 1+\frac{\delta-z}{\delta+1-\gamma}\right) \nonumber \\
&\quad + \frac{1-\delta+\gamma}{2}h\left( \frac{1-z-\gamma}{1+\gamma-\delta}\right)
\end{align}

where $(a)$ follows from the fact that in the Ising channel case, $\int P_w(dw | z,u) h(F(z,u,w))$ takes the form $\sum_{w=0,1} p(w|z,u)h(F(z,u,w))$ and $F(z,u,w)$ is calculated explicitly using (\ref{11}).
We have formulated an equivalent dynamic program problem and found the DP operator explicitly.
The objective is to maximize the average reward $\rho_{\pi}$ over all policies $\pi$.
According to Theorem \ref{Bellman}, if we identify a scalar $\rho$ and bounded function $h$ that satisfies the Bellman Equation,
$\rho+Th(z)=h(z)$, then $\rho$ is the optimal average reward and, therefore, the channel capacity.
\begin{table}[h]
\caption{The conditional distribution $p(x_t,s_{t-1},y_t|p(s_{t-1}|y^{t-1})$ } \centering
\label{tabone}
\begin{tabular}[h]{||c|c|c|c||}
\hline \hline
$x_t$ & $s_{t-1}$ & $y_t=0$ & $y_t=1$\\
\hline \hline
0 & 0 & $p(s_{t-1}=0|y_{t-1})u_t(1,1)$ & 0 \\
\hline
0 & 1 & $0.5p(s_{t-1}=1|y_{t-1})u_t(2,1)$ & $0.5p(s_{t-1}=1|y_{t-1})u_t(2,1)$ \\
\hline
1 & 0 & $0.5p(s_{t-1}=0|y_{t-1})u_t(1,2)$ & $0.5p(s_{t-1}=0|y_{t-1})u_t(1,2)$\\
\hline
1 & 1 & 0 & $p(s_{t-1}=1|y_{t-1})u_t(2,2)$ \\
\hline \hline
\end{tabular}.
\end{table}
%
\subsection{Numerical Evaluation}

The aim of the numerical solution is to obtain some basic knowledge of the bounded function, $h$, which satisfies the Bellman Equation.
In order to do so, the Bellman equation is solved using a value iteration algorithm.
The algorithm generates a sequence of iterations according to
\begin{equation}
\label{eq:value-iteration}
J_{k+1} = T \left( J_k \right),
\end{equation}
where $T$ is the DP operator, as in (\ref{DP_operator}), and $J_0=0$.

For each $k$ and $z$, $J_k(z)$ is the maximal expected reward over $k$ periods given that the system starts in state $z$.
Since rewards are positive, $J_k(z)$ grows with $k$ for each $z$.
For each $k$, we define a differential reward function, $h_k(z) \triangleq J_k(z) - J_k(0)$.

For the numerical analysis, the interval $[0,1]$ was represented with a $1000$ points grid.
Furthermore, each interval, such as $\delta$, which is in $[0,z]$ and $\gamma$, which is in $[0,(1-z)]$, was also represented with a $1000$ points grid.
Obviously, the result has limited accuracy due to machine error in representation of real numbers.
The numerical solution after $20$ value iterations is shown in Fig. \ref{4plots}.
This figure shows the $J_{20}(z)$ function and the corresponding policies, $\gamma^*(z)$ and $\delta^*(z)$.
The policies are chosen numerically such that the equation $T_{\gamma^*,\delta^*}h(z) \geq T_{\gamma,\delta}h(z)$ holds for all $\gamma,\delta$ on the grid,
where $T_{\gamma,\delta}$ represents the DP operator restricted to the policy given by $\gamma,\delta$.
Moreover, Fig. \ref{4plots} shows the histogram of $z$, which represents the relative number of times each point has been occupied by a state $z$.
These values of $z$ have been calculated using (\ref{states}) over $250000$ iterations, where each iteration calculates the next point using (\ref{states}).
Each time a specific value of $z$ was visited the program adds to this value $\frac{1}{250000}$, which gives the relative frequency each point was visited.
\begin{figure*}[h]
\line(1,0){515}
\begin{center}
\begin{psfrags}
    \psfragscanon
    \psfrag{a}[][][0.9]{Value function on the $20^{th}$ iteration, $J_{20}$}
    \psfrag{b}[][][0.9]{Z}
    \psfrag{c}[][][0.9]{$J_{20}$}
    \psfrag{d}[][][0.9]{Histogram of Z}
    \psfrag{e}[][][0.9]{Z}
    \psfrag{f}[][][0.9]{relative frequency}
    \psfrag{g}[][][0.9]{Action parameter, $\delta$}
    \psfrag{h}[][][0.9]{Z}
    \psfrag{i}[][][0.9]{$\delta$}
        \psfrag{j}[][][0.9]{Action parameter, $\gamma$}
    \psfrag{k}[][][0.9]{Z}
    \psfrag{l}[][][0.9]{$\gamma$}
    \psfrag{m}[][][0.9]{$z_0$}
    \psfrag{n}[][][0.9]{$z_1$}
    \psfrag{o}[][][0.9]{$z_2$}
    \psfrag{p}[][][0.9]{$z_3$}

\includegraphics[width=15cm]{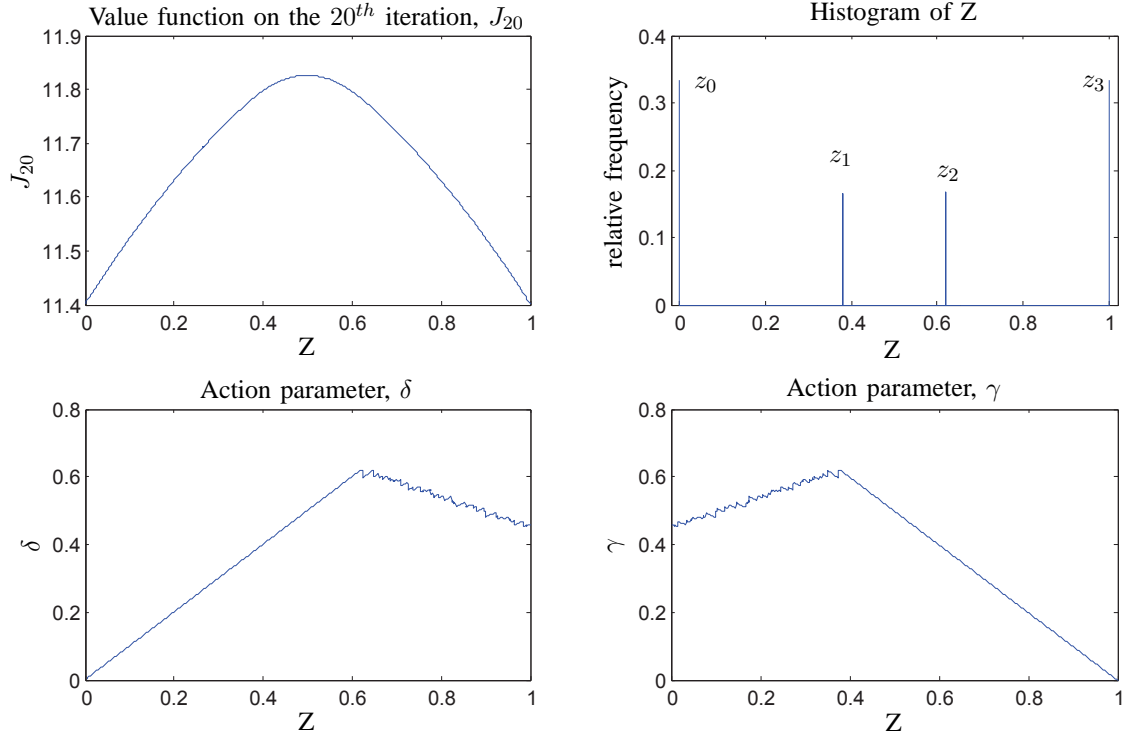}
\caption{Results from 20 value iterations.
On the top-left, the value function $J_{20}$ is illustrated.
On the top-right, the approximate relative state frequencies are shown.
At the bottom, the optimal action policies, $\delta^*$ and $\gamma^*$, are presented as
obtained after the $20^{th}$ value iteration.}
\label{4plots}
\psfragscanoff
\end{psfrags}
\end{center}
\end{figure*}

\section{Analytical Solution}
\label{secsix}
In order to ease the search after the analytical solution, some assumptions based on the numerical solution have been made.
Later on, these assumptions are proved to be correct.

\subsection{Numerical results analysis}
Here, the numerical results are explored in order to obtain further information on the possible analytical solution.
We denote the \emph{optimal} actions with $\gamma^*(z),\delta^*(z)$.
Moreover, $h^*(z)$ denotes the function $h(z)$ restricted to the policy $\gamma^*,\delta^*$.

As we can clearly see from the histogram presented in Fig. \ref{4plots}, the states, $z$, alternate between four major points.
These points, which are symmetric around $\frac{1}{2}$ and two of them are $0$ and $1$, are denoted by $z_0,z_1,z_2,z_3$, where $z_0=0$ and $z_3=1$.
In addition, the function $h^*(z)$ found numerically is assumed to be symmetric around $\frac{1}{2}$.

In order to find $z_i,\quad i=0,1,2,3$, we define $\gamma^*_0=\gamma^*(z=0)=a$.
The variables $z_i,\gamma^*_i=\gamma^*(z_i)$, and $\delta^*_i=\delta^*(z_i)$ for $i=0,1,2,3$, are presented in Table \ref{tabtwo}.
These variables are written with respect to the unknown parameter, $a=\gamma^*_0=\gamma^*(z=0)$, using (\ref{11}), and the following assumptions, which are based on Fig. \ref{4plots}:
i) we assume that $\delta=z\quad \forall z\in[z_0,z_2]$, and ii) we assume there is a symmetry relation, $\gamma^*(z)=\delta^*(1-z)$.
\begin{table}[h!]
\caption{The variables $z_i,\gamma^*_i,\delta^*_i$} \centering
\label{tabtwo}
\begin{tabular}[h]{||c|c|c||}
\hline \hline
$z_0=0$ & $\gamma^*_0=a$ & $\delta^*_0=0$\\
\hline
$z_1=\frac{1-a}{1+a}$ & $\gamma^*_1=\frac{2a}{1+a}$ & $\delta^*_1=\frac{1-a}{1+a}$ \\
\hline
$z_2=\frac{2a}{1+a}$ & $\gamma^*_2=\frac{1-a}{1+a}$ & $\delta^*_2=\frac{2a}{1+a}$ \\
\hline
$z_3=1$ & $\gamma^*_3=0$ & $\delta^*_3=a$\\
\hline \hline
\end{tabular}
\end{table}

Using the numerical solution, we assume that $a\notin \{ 0,1 \}$, which implies that $z_1,z_2\notin \{0,1\}$.
In addition, we assume that $z_1\leq z_2$ and hence $a\geq \frac{1}{3}$.
There is no loss of generality here, since otherwise we could do the same analysis and switch between $z_1$ and $z_2$.
Moreover, using (\ref{11}) and basic algebra, one can notice that the selection of $\gamma^*_i,\delta^*_i \quad i=0,1,2,3,$ as in Table.\ref{tabtwo}, guarantees that
the states, $z_t$, alternate between the points $z_i$.
For example, in order to attain the $z_i$ points, (\ref{11}) was used in the following way:
if $z_{t-1}=0$ then we have that $\gamma^*_0=a,\delta^*_0=0$ which implies
\begin{enumerate}[-]
\item if $w_t=0$ then $z_t=1+\frac{\delta^*_t-z_{t-1}}{1+\delta^*_t-\gamma^*_t}=1+\frac{0}{1-a}=1$.
\item if $w_t=1$ then $z_t=\frac{1-z_{t-1}-\gamma^*_t}{1+\gamma^*_t-\delta^*_t}=\frac{1-0-a}{1+a}=\frac{1-a}{1+a}$.
\end{enumerate}
Now, if $z_{t-1}=\frac{1-a}{1+a}$, using the symmetry, we also have the point $z_t=1-\frac{1-a}{1+a}=\frac{2a}{1+a}$.
Since $a\geq \frac{1}{3}$ we have $\frac{1-a}{1+a}\leq \frac{2a}{1+a}$.
Thus, for any specific time $t$, $z_t\in \{ z_0=0, z_1=\frac{1-a}{1+a}, z_2=\frac{2a}{1+a}, z_3=1\}$.

In addition, based on the numerical solution, we assume that $\gamma^*$ and $\delta^*$ can be approximated by straight lines.
Therefore, the following expressions can be found:
\begin{equation}
\label{gamma}
\gamma^*(z)=\left\{ \begin{array}{ll}
a+az,& \text{if } z\in [z_0,z_1]\\
1-z,& \text{if } z\in [z_1,z_3]
\end{array} \right.
\end{equation}
\\
\begin{equation}
\label{delta}
\delta^*(z)=\left\{ \begin{array}{ll}
z,& \text{if } z\in [z_0,z_2]\\
a(2-z),& \text{if } z\in [z_2,z_3].
\end{array} \right.
\end{equation}

Note that if a scalar, $\rho$, and a function, $h$, solve the Bellman Equation, so do $\rho$
and $h+c1$ for any scalar $c1$.
Hence, with no loss of generality, we can assume $h^*(\frac{1}{2})=1$.
\begin{lemma}
\label{mid_section_h}
Let $\gamma^*$ and $\delta^*$ be as in (\ref{gamma}),(\ref{delta}), then
\begin{align}
h^*(z)=H_b(z)\quad \forall z\in[z_1,z_2].
\end{align}
and $h^*(0)=h^*(1)=\rho^*$ satisfies $T_{\delta^*,\gamma^*}h^*(z)=h^*+\rho^* \quad \forall z\in[z_1,z_2]$.
\end{lemma}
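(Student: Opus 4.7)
The plan is to show that under the proposed policy, the states in $[z_1,z_2]$ transition in one step to the boundary points $0$ and $1$, so that the DP operator can be evaluated explicitly and the Bellman equation collapses to an algebraic identity determining $h^*$.

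First, I would specialize the candidate policies to the interval. For $z \in [z_1, z_2]$, formulas (\ref{gamma}) and (\ref{delta}) give $\gamma^*(z) = 1-z$ and $\delta^*(z) = z$. Substituting these into the transition rule (\ref{11}) yields
\begin{equation*}
1 + \frac{\delta^*(z) - z}{1 + \delta^*(z) - \gamma^*(z)} = 1 + \frac{0}{2z} = 1,
\qquad
\frac{1 - z - \gamma^*(z)}{1 + \gamma^*(z) - \delta^*(z)} = \frac{0}{2(1-z)} = 0.
\end{equation*}
So regardless of which disturbance $w_t \in \{0,1\}$ occurs, the next state lies in $\{0,1\}$. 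This is the crucial structural property that makes the restricted operator tractable on this interval.

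Next I would plug into the reward formula (\ref{reward}). With $\delta^* - \gamma^* = 2z - 1$ and $\delta^* + \gamma^* - 1 = 0$, the stage reward becomes $H_b(z)$. Combining with the step above, and using $1+\delta^*-\gamma^* = 2z$, $1-\delta^*+\gamma^* = 2(1-z)$, the DP operator restricted to $(\gamma^*,\delta^*)$ simplifies to
\begin{equation*}
T_{\gamma^*,\delta^*} h^*(z) \;=\; H_b(z) \;+\; z\,h^*(1) \;+\; (1-z)\,h^*(0), \qquad z \in [z_1,z_2].
\end{equation*}

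Finally, using the assumed boundary values $h^*(0) = h^*(1) = \rho^*$, the right side becomes $H_b(z) + \rho^*$. The Bellman equation for the restricted policy, $T_{\gamma^*,\delta^*} h^*(z) = h^*(z) + \rho^*$, then forces $h^*(z) = H_b(z)$ on $[z_1,z_2]$, proving both conclusions simultaneously. There is essentially no obstacle here beyond bookkeeping; the substantive content of the lemma is the observation that the chosen $(\gamma^*,\delta^*)$ send the middle interval into the boundary $\{0,1\}$ in one step, which is what reduces the fixed-point equation to the explicit identity $h^* = H_b$. The harder work—showing that $h^*(0)=h^*(1)=\rho^*$ with $\rho^* = 2H_b(a)/(3+a)$, and extending $h^*$ to $[0,z_1]\cup[z_2,1]$ so the full Bellman equation is satisfied—will be carried out in subsequent lemmas.
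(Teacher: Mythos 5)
Your proposal is correct and follows essentially the same route as the paper: substitute $\delta^*(z)=z$, $\gamma^*(z)=1-z$ into the restricted operator, observe that the two transition arguments collapse to $1$ and $0$ with weights $z$ and $1-z$, that the stage reward reduces to $H_b(z)$, and then use $h^*(0)=h^*(1)=\rho^*$ to reduce the restricted Bellman equation to $h^*(z)=H_b(z)$ on $[z_1,z_2]$. The only cosmetic difference is that the paper phrases the boundary identity $h^*(0)=h^*(1)=\rho^*$ as a consequence of the assumed symmetry and normalization rather than as a hypothesis, but the computation is identical.
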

\begin{proof}
Using the definition of $(T_{\delta^*,\gamma^*}h^*)(z),\delta^*$, and $\gamma^*$ we obtain
\begin{eqnarray}
h^*(z)+\rho^*&=&(T_{\delta^*,\gamma^*}h^*)(z) \nonumber \\
&=& H \left( \frac{1}{2}+\frac{z-(1-z)}{2}\right)+ z + (1-z) -1 \nonumber \\
&&\quad +\frac{1+z-(1-z)}{2}h^*\left( 1+\frac{z-z}{z+1-(1-z)}\right) + \frac{1-z+(1-z)}{2}h^*\left( \frac{1-z-(1-z)}{1+(1-z)-z}\right)  \nonumber \\
&=& H_b(z)+zh^*(1) + (1-z)h^*(0)
\end{eqnarray}
for all $z\in[z_1,z_2]$.
Using the symmetry $h^*(0)=h^*(1)$, we conclude that $(T_{\delta^*,\gamma^*}h^*)(z)=H_b(z)+h^*(0)\quad \forall z\in[z_1,z_2]$,
which implies $h^*(0)=h^*(1)=\rho^*$ and $h^*(z)=H_b(z)$.
\end{proof}
\begin{lemma}
\label{end-section-h}
Let $\gamma^*$ and $\delta^*$ be as in (\ref{gamma}),(\ref{delta}), then
\begin{align}
h^*(z)&= \left( \frac{1}{1-a} \right) H\left( \frac{2a+(1-a)z}{2} \right) -z+\frac{az-4a-z}{2(1-a)}\rho^* \nonumber \\
&\quad + \frac{2a+(1-a)z}{2(1-a)}H\left( \frac{2a}{a(2-z)+z}\right)\quad ,\forall z\in[z_2,z_3].
\end{align}
and $h^*(0)=h^*(1)=\rho^*$ satisfies $T_{\delta^*,\gamma^*}h^*(z)=h^*+\rho^* \quad \forall z\in[z_2,z_3]$.
\end{lemma}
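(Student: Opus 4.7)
The plan is to verify the Bellman equation $T_{\delta^*, \gamma^*} h^* = h^* + \rho^*$ on $[z_2, z_3]$ by direct substitution, exploiting the structural simplicity that emerges once the transitions are identified. First I would specialize the DP operator in (\ref{DP_operator}) to $\gamma^*(z) = 1-z$ and $\delta^*(z) = a(2-z)$, which collapses the various $\delta$--$\gamma$ combinations into the single quantity $A(z) := 2a + (1-a)z = a(2-z) + z$. Under this substitution, the argument of the binary entropy in the reward becomes $A(z)/2$, the $w_t = 1$ branch sends the state exactly to $0$ because the numerator $1 - z - \gamma^*(z)$ vanishes identically, and the $w_t = 0$ branch sends the state to $f(z) := 2a(2-z)/A(z)$, which one checks maps $[z_2, z_3]$ onto itself.

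Next I would invoke $h^*(0) = \rho^*$, established in Lemma \ref{mid_section_h}, to replace the $w_t = 1$ contribution by $\tfrac{(1-a)(2-z)}{2}\,\rho^*$. The Bellman equation then reduces to a functional equation in $h^*$ alone on $[z_2, z_3]$:
\begin{equation*}
h^*(z) + \rho^* = H_b\!\left(\tfrac{A(z)}{2}\right) + a(2-z) - z + \tfrac{A(z)}{2}\, h^*(f(z)) + \tfrac{(1-a)(2-z)}{2}\,\rho^*.
\end{equation*}
The critical observation is that $f$ is an involution on $[z_2, z_3]$: a short computation using the identity $A(z) + (1-a)(2-z) = 2$ gives $A(f(z)) = 4a/A(z)$, so the two binary-entropy arguments $A(z)/2$ and $2a/A(z)$ that appear in the proposed closed form for $h^*$ simply swap places when we pass from $h^*(z)$ to $h^*(f(z))$.

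Substituting the proposed formula for $h^*(f(z))$ into the functional equation, the swap guarantees that the coefficient of $H_b(A(z)/2)$ on the right-hand side collects to $\tfrac{1}{1-a}$, matching the coefficient in $h^*(z)$, and likewise for $H_b(2a/A(z))$. The linear-in-$z$ terms also match after noting $\tfrac{A(z)}{2} \cdot f(z) = a(2-z)$. What remains is the coefficient of $\rho^*$, which reduces to the algebraic identity $(1-2a)(1-a) - 2a^2 = 1 - 3a$, a tautology. This verifies the Bellman equation on $[z_2, z_3]$; plugging $z = 1$ into the proposed formula yields $h^*(1) = \rho^*$, which together with the symmetry inherited from Lemma \ref{mid_section_h} gives $h^*(0) = h^*(1) = \rho^*$.

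The main obstacle is bookkeeping in the algebraic verification, since the proposed $h^*$ is built from three additive pieces each mixing rational and entropy terms. The involution $f \circ f = \mathrm{id}$ and the identity $A(z) + (1-a)(2-z) = 2$ are the two structural facts that make the cancellations transparent; without them the interaction between $h^*(z)$ and $\tfrac{A(z)}{2}\, h^*(f(z))$ would be opaque. A useful consistency check along the way is to evaluate the formula at the boundary $z = z_2$, where continuity with Lemma \ref{mid_section_h} requires $h^*(z_2) = H_b(z_2)$; this produces a linear constraint on $\rho^*$ compatible with, though not yet pinning down, the eventual value $\rho^* = 2H_b(a)/(3+a)$.
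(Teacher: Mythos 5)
Your proposal is correct and takes essentially the same route as the paper: both reduce the Bellman equation on $[z_2,z_3]$ to the single functional equation $h^*(z)+\rho^* = H_b\left(\tfrac{A(z)}{2}\right) + a(2-z)-z+\tfrac{A(z)}{2}\,h^*(l(z))+\tfrac{(1-a)(2-z)}{2}\,\rho^*$ with $l(z)=2a(2-z)/A(z)$ an involution of $[z_2,z_3]$, the paper then ``applying the equation twice'' and solving the resulting two-equation linear system for $h^*(z)$ --- which is exactly the derivational counterpart of your substitute-and-verify check. The structural identities you isolate, $A(z)+(1-a)(2-z)=2$ and $A(l(z))=4a/A(z)$, are precisely what the paper's ``simple algebra'' step relies on.
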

\begin{proof}
Using the policy of $\gamma^*,\delta^*$ as in (\ref{gamma}),(\ref{delta}), one can write $(T_{\delta^*,\gamma^*}h^*)(z)\quad \forall z\in[z_2,z_3]$:
\begin{align}
\label{oneone}
h^*(z)+\rho^*&=(T_{\delta^*,\gamma^*}h^*)(z) \nonumber \\
&= H\left( \frac{2a+(1-a)z}{2}\right) +2a-(1+a)z+ \frac{2a+(1-a)z}{2}h^*\left( \frac{4a-2az}{2a+(1-a)z}\right) \nonumber \\
&\quad +\frac{2-2a+(a-1)z}{2}h^*(0).
\end{align}
Note that the argument in the function $h^*$ in (\ref{oneone}), which we denote here as
$l(z):=\frac{4a-2az}{2a+(1-a)z}$, is in $[z_2,z_3]$ for $z\in [z_2,z_3]$.
Hence, we can apply (\ref{oneone}) twice. By using simple algebra we obtain
\begin{align}
\label{h^*(z)}
h^*(z)&= \left( \frac{1}{1-a} \right) H\left( \frac{2a+(1-a)z}{2} \right) -z+\frac{az-4a-z}{2(1-a)}\rho^* \nonumber \\
&\quad + \frac{2a+(1-a)z}{2(1-a)}H\left( \frac{2a}{a(2-z)+z}\right)\quad ,\forall z\in[z_2,z_3].
\end{align}
\end{proof}
%
\begin{figure*}[t]
\begin{center}
\begin{psfrags}
    \psfragscanon
    \psfrag{a}[][][0.9]{The function $h^*(z)$}
    \psfrag{b}[][][0.9]{z}
    \psfrag{c}[][][0.9]{$h^*(z)$}
    \psfrag{d}[][][0.9]{Histogram of Z}
    \psfrag{e}[][][0.9]{z}
    \psfrag{f}[][][0.9]{relative frequency}
    \psfrag{g}[][][0.9]{Action parameter, $\delta^*$}
    \psfrag{h}[][][0.9]{z}
    \psfrag{i}[][][0.9]{$\delta^*$}
        \psfrag{j}[][][0.9]{Action parameter, $\gamma^*$}
    \psfrag{k}[][][0.9]{z}
    \psfrag{l}[][][0.9]{$\gamma^*$}

\includegraphics[width=15cm]{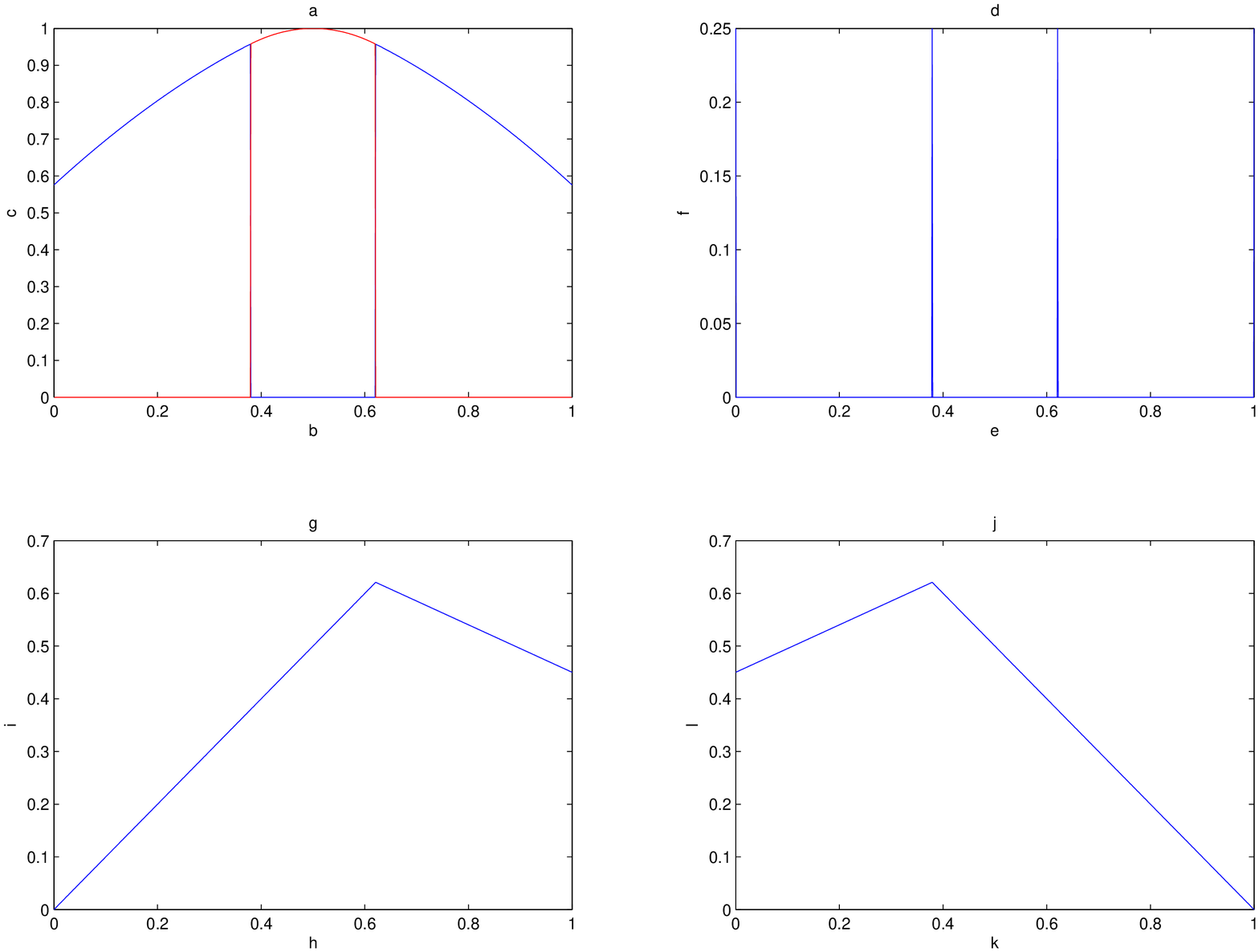}
\caption{The results using the numerical analysis. On the top-left, the value function $h^*(z)$ is shown.
On the top-right, the assumed relative state frequencies are shown (the figure presents only the states).
At the bottom, the optimal action policies $\delta^*$ and $\gamma^*$ are illustrated.
As can be seen, these match perfectly the results obtained numerically after
the $20^{th}$ value iteration.}
\label{matlab1}
\psfragscanoff
\end{psfrags}
\end{center}
\line(1,0){515}
\end{figure*}
Using the symmetry relation, one can derive $h^*(z)$ for $z\in[z_0,z_1]$.
If we set $z=1$ in (\ref{h^*(z)}) and take $\rho^*=h^*(1)$ we obtain that $\rho^*=\frac{2H(a)}{a+3}$.
This expression is examined later on.

In order to find the variable $a$ we make sure that our $\gamma^*,\delta^*$ are indeed the arguments that maximize $Th^*(z)$.
To show this, we differentiate the expression $T_{\gamma,\delta}h^*(z)$ with respect to $\delta$ and set the result
to $0$. Recall that $T_{\gamma,\delta}$ is the DP operator restricted to the policy $\gamma,\delta$, hence it is the DP operator without the supremum.
Using basic algebra and setting $\gamma^*,\delta^*$ as in (\ref{gamma}), (\ref{delta}) with $a$ as a variable,
we obtain
\begin{equation}
\frac{\partial T_{\delta,\gamma}h^*(z)}{\partial \delta}=\frac{\rho^*}{a-1}+\frac{\log_2(a)}{2(a-1)}
\end{equation}
with the notation of $0 \cdot \log (0) = 0$, since $\lim_{t\rightarrow 0} t\cdot \log(t)=0$.
Replacing $\rho^*$ with $\frac{2H(a)}{a+3}$ and setting the result to zero
we find the variable $a$ to be the root of a fourth-degree polynomial, $x^4-5x^3+6x^2-4x+1$, where, using the Ferrari
Formula, it can be found explicitly.
We find that this polynomial has two imaginary roots, one root which is grater then 1 ($\approx 3.63$), and one root in $[0,1]$,
which is roughly $0.4503$.
Hence, the only suitable value is $a\approx 0.4503$.
Calculating $\rho^*=\frac{2H(a)}{a+3}$ we find that $\rho^*\approx 0.575522$.

In Fig. \ref{matlab1}, the results using the numerical analysis are presented.
The upper pictures present the function
\begin{equation}
\label{thefunction}
h^*(z)=\left\{ \begin{array}{ll}
H_b(z),& \text{if } z\in [\frac{1-a}{1+a},\frac{2a}{1+a}]\\
\left(\frac{1}{1-a}\right) H\left(\frac{2a+(1-a)z}{2}\right)-z+\frac{az-4a-z}{2(1-a)}\rho \\
+\frac{2a+(1-a)z}{2(1-a)} H \left( \frac{2a}{a(2-z)+z} \right),& \text{if } z\in [\frac{2a}{1+a},1]
\end{array} \right.
\yesnumber
\end{equation}
as found using the numerical evaluation and the histogram of $z$,
where for $z\in [0,\frac{1-a}{1+a}]$ we derive $h^*(z)$ using the symmetry of $h^*(z)$ with respect to $\frac{1}{2}$.
The histogram shows all the values which were occupied by the state $z_t$ for some $t$;
the relative frequency shows how many times a value has been occupied.
The bottom line presents the actions parameters $\delta^*(z)$ and $\gamma^*(z)$ as obtained numerically.
Fig. \ref{matlab2} shows the function $J_{20}$, which was attained after $20$ value iterations together with the function $h^*(z)$ obtained in the numerical analysis.

\begin{figure}[h!]
\begin{center}
\begin{psfrags}
    \psfragscanon
    \psfrag{a}[][][0.9]{The functions $h(z)$ and $J_{20}$}
    \psfrag{b}[][][0.9]{z}
    \psfrag{c}[][][0.9]{$h(z)$ and $J_{20}$}

\includegraphics[width=10cm]{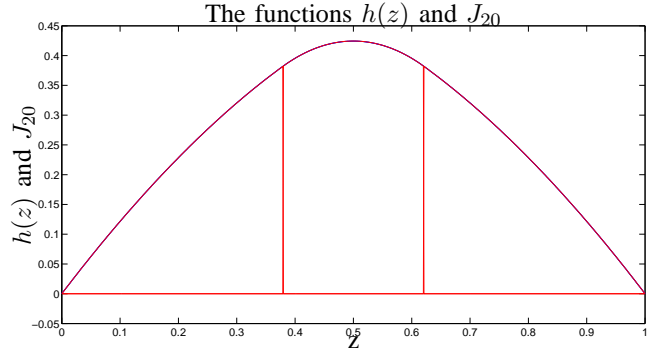}
\caption{The function $J_{20}$ as found numerically and the function $h^*(z)$ found using the numerical analysis
on the same plot. The two results match perfectly.}
\label{matlab2}
\psfragscanoff
\end{psfrags}
\end{center}
\end{figure}

\subsection{Analytical Solution Verifications}
In this section we verify that the function $h^*(z)$, as found in (\ref{thefunction}), and $\rho^*=\frac{2H(a)}{3+a}$, indeed satisfy the Bellman Equation, $Th^*(z)=h^*(z)+\rho^*$.
Furthermore, we show that the selection of $\delta^*,\gamma^*$, as in (\ref{delta}), (\ref{gamma}) maximizes $Th^*(z)$.
Namely, we prove Theorem \ref{main_result}.

We begin with proving the following lemma:
\begin{lemma}
\label{mean1}
The policies $\gamma^*,\delta^*$ which are defined in (\ref{gamma}), (\ref{delta}):
\begin{equation*}
\gamma^*(z)=\left\{ \begin{array}{ll}
a+az,& \text{if } z\in [z_0,z_1]\\
1-z,& \text{if } z\in [z_1,z_3]
\end{array} \right.
\end{equation*}
\\
\begin{equation*}
\delta^*(z)=\left\{ \begin{array}{ll}
z,& \text{if } z\in [z_0,z_2]\\
a(2-z),& \text{if } z\in [z_2,z_3]
\end{array} \right.
\end{equation*}
maximize $Th^*(z)$, i.e. $T_{\delta^*,\gamma^*}h^*(z)=Th^*(z)$, where $a\approx 0.4503$ is a specific root of the fourth-degree polynomial $x^4-5x^3+6x^2-4x+1$.
\end{lemma}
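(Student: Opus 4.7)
My goal is to verify two things simultaneously for every $z \in [0,1]$: (i) that $(\gamma^*(z),\delta^*(z))$ is feasible and attains the value $T_{\gamma^*,\delta^*}h^*(z) = h^*(z) + \rho^*$ prescribed by the Bellman equation, and (ii) that no other choice of $(\gamma,\delta)$ in the feasible rectangle $[0,z]\times[0,1-z]$ yields a larger value of $T_{\gamma,\delta}h^*(z)$. Feasibility is a routine check from the formulas in (\ref{gamma})--(\ref{delta}). Identity (i) on $[z_1,z_2]$ and $[z_2,z_3]$ is already delivered by Lemmas \ref{mid_section_h} and \ref{end-section-h}; on $[z_0,z_1]$ it then follows from the symmetry relations $h^*(z)=h^*(1-z)$, $\gamma^*(z)=\delta^*(1-z)$ together with the observation that swapping $(\gamma,\delta)\mapsto(\delta,\gamma)$ in the DP operator corresponds to reflecting $z\mapsto 1-z$ (which is visible from (\ref{11})). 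Thus the only real content is (ii), the optimality of the candidate policy.

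For (ii), I would first split $[0,1]$ into the three subintervals $[0,z_1]$, $[z_1,z_2]$, $[z_2,1]$ and, within each, split further according to which piece of the piecewise definition of $h^*$ the two next-states $F(z,u,0)=1+\tfrac{\delta-z}{1+\delta-\gamma}$ and $F(z,u,1)=\tfrac{1-z-\gamma}{1+\gamma-\delta}$ fall into. Inside a given cell the integrand $\Phi(\gamma,\delta;z):=H_b\!\bigl(\tfrac12+\tfrac{\delta-\gamma}{2}\bigr)+\delta+\gamma-1+\tfrac{1+\delta-\gamma}{2}h^*(F(z,u,0))+\tfrac{1-\delta+\gamma}{2}h^*(F(z,u,1))$ is smooth in $(\gamma,\delta)$, so I can attack it with first-order conditions. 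I would compute $\partial_\delta\Phi$ and $\partial_\gamma\Phi$ explicitly using the closed forms of $h^*$ given in Lemmas \ref{mid_section_h}--\ref{end-section-h}; substituting $(\gamma^*(z),\delta^*(z))$ and the relation $\rho^* = \tfrac{2H_b(a)}{3+a}$, the defining polynomial $a^4-5a^3+6a^2-4a+1=0$ should force both partials either to vanish (interior optimum) or to have the sign dictated by the active inequality $\delta=z$ or $\gamma=1-z$ (boundary optimum). In particular the computation already displayed in the paper, $\partial_\delta T_{\gamma,\delta}h^*(z)=\tfrac{\rho^*}{a-1}+\tfrac{\log_2 a}{2(a-1)}$, is exactly the first-order condition that selects the root $a$.

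For global rather than merely local optimality, I would then rely on joint concavity of $\Phi(\gamma,\delta;z)$ in each cell. The immediate-reward term $H_b(\tfrac12+\tfrac{\delta-\gamma}{2})+\delta+\gamma-1$ is concave in $(\gamma,\delta)$ because the binary entropy is concave and the linear part is affine. The continuation term $\tfrac{1+\delta-\gamma}{2}h^*(F(z,u,0))+\tfrac{1-\delta+\gamma}{2}h^*(F(z,u,1))$ is a convex combination of $h^*$ evaluated at two points that depend fractional-linearly on $(\gamma,\delta)$; since $h^*$ is piecewise a sum of a binary entropy and an affine function, a direct Hessian computation inside each cell should confirm concavity (or at least that the gradient has a unique zero on the feasible region). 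Finally, I would match values across the cell boundaries: because $h^*$ is continuous and the cell decomposition covers all feasible $(\gamma,\delta)$, it suffices to check that the per-cell maximizers coincide with $(\gamma^*(z),\delta^*(z))$.

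The main obstacle I anticipate is the bookkeeping in (ii): there are several cells to enumerate, the two arguments $F(z,u,0)$ and $F(z,u,1)$ can each fall into three pieces of the definition of $h^*$, and one must keep careful track of which boundary constraints $\delta\in\{0,z\}$ or $\gamma\in\{0,1-z\}$ are active as $z$ moves across $z_1$ and $z_2$. The algebra driving the root $a$ of $x^4-5x^3+6x^2-4x+1$ is short once $\partial_\delta\Phi=0$ is substituted into $\rho^*=\tfrac{2H_b(a)}{3+a}$; the delicate part is checking that the Hessian is indeed negative semidefinite on each cell so that the first-order stationary point is the unique global maximum of $T_{\gamma,\delta}h^*(z)$ and no boundary kink produces a higher value.
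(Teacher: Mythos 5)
Your first-order analysis is on target --- the stationarity condition $\partial_\delta T_{\gamma,\delta}h^*(z)=0$ does reduce, after substituting $\rho^*=\tfrac{2H_b(a)}{3+a}$, to the quartic $a^4-5a^3+6a^2-4a+1=0$, and the boundary/interior dichotomy you describe (active constraint $\gamma=1-z$ on $[z_2,1]$, both constraints active on $[z_1,z_2]$, symmetry for $[0,z_1]$) is exactly the KKT structure the paper exploits. But there is a genuine gap in your global-optimality step. Your plan is to establish concavity of $\Phi(\gamma,\delta;z)$ \emph{cell by cell} (the cells being determined by which piece of the piecewise definition of $h^*$ each next-state falls into) and then to ``match values across cell boundaries.'' Per-cell concavity plus continuity does not give you global optimality: in a cell that does not contain $(\gamma^*(z),\delta^*(z))$, the cell-restricted maximizer will sit on that cell's boundary, and you would then have to compare its value against $\Phi(\gamma^*,\delta^*;z)$ explicitly --- and also rule out that a kink of the piecewise structure is itself a local maximum beating the candidate. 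You flag this as the ``delicate part'' but do not supply the argument, and it is precisely the hard part. Moreover, your proposed tool for per-cell concavity of the continuation term (``a direct Hessian computation'') is the wrong instrument at the seams: $h^*$ is only piecewise smooth, so second derivatives do not exist at the junctions $z_1,z_2$, which is exactly where the cell boundaries live.

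The paper closes this gap with a single structural observation that makes the cell decomposition unnecessary: $h^*$ is \emph{globally} concave on $[0,1]$. This is not automatic for a piecewise-defined function --- it requires checking that the one-sided derivatives of the three pieces agree at $z_1$ and $z_2$ (Lemma \ref{lemma_ohad} and Corollary \ref{cor_ohad}). Once $h^*$ is globally concave, each continuation term $\tfrac{1+\delta-\gamma}{2}\,h^*\bigl(1+\tfrac{\delta-z}{1+\delta-\gamma}\bigr)$ is concave in $(\delta,\gamma)$ by the perspective property of concave functions, proved directly from the definition of concavity with no differentiability needed (Lemma \ref{lemma_concave}). Hence $g(\delta,\gamma)=T_{\gamma,\delta}h^*(z)$ is jointly concave over the entire feasible rectangle $[0,z]\times[0,1-z]$, the KKT conditions are necessary \emph{and sufficient}, and verifying them at the single candidate point $(\delta^*(z),\gamma^*(z))$ finishes the proof with no cross-cell comparison. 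To repair your argument you would either need to import this global-concavity lemma (checking the derivative matching at $z_1,z_2$) or actually carry out the cell-boundary comparisons you deferred; as written, the proposal does not establish that the stationary point you find is the global maximum of $Th^*(z)$.
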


In order to prove Lemma \ref{mean1} we need two main lemmas, which use the following notation:
we denote the expression $g(\delta,\gamma)=T_{\delta,\gamma}h^*(z)$ (which is $Th^*(z)$ without the supremum, restricted to the policy $(\delta,\gamma)$) as follows:
\begin{align}
g(\delta,\gamma)&=H\left(\frac{1}{2}+\frac{\delta-\gamma}{2}\right) +\delta +\gamma -1
+\frac{1+\delta-\gamma}{2}h^*\left( 1+\frac{\delta-z}{\delta+1-\gamma}\right) \nonumber \\
& \quad +\frac{1-\delta+\gamma}{2} h^*\left(\frac{1-z-\gamma}{1+\gamma-\delta}\right).
\end{align}

The first main lemma shows that the function $g(\delta,\gamma)$ is concave in $(\delta,\gamma)$.
To prove this lemma we first show that the concatenation of any finite collection of continuous concave functions,
$\{f_i:[\alpha_{i-1},\alpha_{i}]\rightarrow \mathbb{R}\} \quad i=1,2,\cdots,n$,
where $\alpha_{i-1}\leq \alpha_{i}$ and $\alpha_i\in \mathbb{R}$,
which each have the same derivative at the concatenation points $\left(f'_{i-}(\alpha_i)=f'_{(i+1)+}(\alpha_i)\right)$,
is a continuous concave function.
It is sufficient to show that the concatenation of two continuous, concave functions with the same derivative at the concatenation point is continuous and concave.
The proof for any finite collection of such functions results using induction.

\begin{lemma}
\label{lemma_ohad}
Let $f:[\alpha,\beta]\rightarrow \mathbb{R}$, $g:[\beta,\gamma]\rightarrow \mathbb{R}$ be two continuous,
concave functions where $f(\beta)=g(\beta)$,
$f'_-(\beta)=g'_+(\beta)$, where $f'_-(\beta)$ denotes the left derivative of $f(x)$ at
$\beta$ and $g'_+(\beta)$ denotes the right derivative of $g(x)$ at $\beta$.
The function obtained by concatenating $f(x)$ and $g(x)$ defined by
\begin{equation}
\eta(x)=\left\{ \begin{array}{ll}
f(x),& \text{if } x\in [\alpha,\beta]\\
g(x),& \text{if } x\in [\beta,\gamma]
\end{array} \right.
\end{equation}
is continuous and concave.
\end{lemma}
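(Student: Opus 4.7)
The plan is to dispose of continuity first and then reduce concavity to the standard three-slope inequality. Continuity of $\eta$ on $[\alpha,\gamma]$ is immediate: each piece is continuous on its sub-interval, and the matching condition $f(\beta)=g(\beta)$ guarantees continuity at the junction point.

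For concavity, I would verify that for every triple $x_1<x_2<x_3$ in $[\alpha,\gamma]$,
$$
\frac{\eta(x_2)-\eta(x_1)}{x_2-x_1}\;\geq\;\frac{\eta(x_3)-\eta(x_2)}{x_3-x_2},
$$
which is equivalent to concavity for a continuous function on an interval. Triples lying entirely in $[\alpha,\beta]$ or entirely in $[\beta,\gamma]$ are immediate from concavity of $f$ or of $g$, so the real work is the case when the triple straddles $\beta$. Focus on the representative sub-case $x_1<x_2\leq\beta\leq x_3$ (the symmetric sub-case $x_1\leq\beta\leq x_2<x_3$ is handled identically). Using $f(\beta)=g(\beta)$, the right-hand slope decomposes as a convex combination
$$
\frac{\eta(x_3)-\eta(x_2)}{x_3-x_2}=\frac{\beta-x_2}{x_3-x_2}\cdot\frac{f(\beta)-f(x_2)}{\beta-x_2}+\frac{x_3-\beta}{x_3-x_2}\cdot\frac{g(x_3)-g(\beta)}{x_3-\beta},
$$
whose coefficients visibly sum to one.

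The key step is to bound the two slopes on the right using the derivative-matching hypothesis. Concavity of $f$ on $[x_2,\beta]$ yields $\frac{f(\beta)-f(x_2)}{\beta-x_2}\geq f'_-(\beta)$, while concavity of $g$ on $[\beta,x_3]$ yields $\frac{g(x_3)-g(\beta)}{x_3-\beta}\leq g'_+(\beta)$; since $f'_-(\beta)=g'_+(\beta)$, the $g$-slope is dominated by the $f$-slope, so the convex combination above is itself at most $\frac{f(\beta)-f(x_2)}{\beta-x_2}$. A final application of concavity of $f$ to the triple $x_1<x_2<\beta$ shows that this bound is in turn at most $\frac{f(x_2)-f(x_1)}{x_2-x_1}$, closing the chain.

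The main obstacle is bookkeeping rather than any deep estimate: one has to keep track of which direction each slope inequality runs and handle the degenerate edge cases where $x_2=\beta$ or $x_3=\beta$, in which the convex-combination decomposition collapses to a single term and the argument becomes even shorter. The derivative-matching condition $f'_-(\beta)=g'_+(\beta)$ is precisely what allows one to \emph{hop} across $\beta$ when chaining slope bounds; without it, the slope across $\beta$ could easily exceed any slope purely inside $[\alpha,\beta]$, and the concatenation would fail to be concave even though both pieces are. The extension to a finite concatenation of continuous concave pieces, as claimed in the paragraph preceding the lemma, then follows by induction on the number of junction points.
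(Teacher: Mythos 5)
Your proof is correct, and it takes a genuinely different route from the paper's. The paper extends $f$ past $\beta$ by its tangent line of slope $f'_-(\beta)$ and extends $g$ backward by its tangent line of slope $g'_+(\beta)$, observes (using concavity and the matching derivatives) that the tangent extension of each piece dominates the other piece on the other subinterval, writes $\eta=\min\{f_1,g_1\}$, and invokes the fact that a pointwise minimum of concave functions is concave. You instead verify the three-chord inequality directly, decomposing the slope of the chord that straddles $\beta$ as a convex combination of the two one-sided chord slopes and using $f'_-(\beta)=g'_+(\beta)$ as the pivot to chain the bounds $\frac{g(x_3)-g(\beta)}{x_3-\beta}\leq g'_+(\beta)=f'_-(\beta)\leq\frac{f(\beta)-f(x_2)}{\beta-x_2}\leq\frac{f(x_2)-f(x_1)}{x_2-x_1}$. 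The paper's argument is shorter and more geometric (it comes with a picture), but it quietly asserts that the tangent-extended functions $f_1,g_1$ are themselves concave, which is really a special (affine-piece) instance of the very concatenation statement being proved; your argument is fully self-contained and more elementary, at the price of some case bookkeeping (which you correctly flag and dispose of for $x_2=\beta$ and $x_3=\beta$). Both proofs use the derivative-matching hypothesis at the same conceptual spot, and both yield the finite-concatenation extension by induction.
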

The proof of Lemma \ref{lemma_ohad} is in the appendix.

We now conclude that the function $h^*(z)$ is concave in $z$.
\begin{corollary}
\label{cor_ohad}
The function $h^*(z)$ as given in (\ref{thefunction}) is continuous and concave for all $z\in [0,1]$.
\end{corollary}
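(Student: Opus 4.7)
The plan is to apply Lemma \ref{lemma_ohad} twice, exploiting the fact that $h^*(z)$ is defined piecewise on the three intervals $[0,z_1]$, $[z_1,z_2]$, and $[z_2,1]$ with $z_1=\frac{1-a}{1+a}$ and $z_2=\frac{2a}{1+a}$. The corollary then reduces to two ingredients: (i) each piece is continuous and concave on its subinterval, and (ii) at each of the two junctions, the function is continuous and its one-sided derivatives agree.

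For ingredient (i), the middle piece equals $H_b(z)$, which is strictly concave on $[0,1]$. For the right piece on $[z_2,1]$, I would rewrite
\[
h^*(z)=\tfrac{1}{1-a}H_b\!\left(\tfrac{2a+(1-a)z}{2}\right) + \tfrac{2a+(1-a)z}{2(1-a)}H_b\!\left(\tfrac{2a}{a(2-z)+z}\right) + L(z),
\]
where $L(z)$ is affine in $z$. The first summand is $H_b$ composed with an increasing affine map and scaled by a positive constant, hence concave. The product summand is the non-trivial part; I would establish its concavity by computing $\tfrac{d^2 h^*}{dz^2}$ directly, using that $\tfrac{d^2}{dz^2}H_b(\phi(z))$ has a clean closed form whenever $\phi$ is a Möbius function of $z$, and verifying that $\tfrac{d^2 h^*}{dz^2}\leq 0$ on $[z_2,1]$. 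Concavity of the left piece on $[0,z_1]$ then follows immediately from the symmetry $h^*(z)=h^*(1-z)$, since the reflection $z\mapsto 1-z$ preserves concavity.

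For ingredient (ii), I would substitute $z=z_2$ into both the middle formula $H_b(z_2)$ and the right-piece formula and check that they agree, using $\rho^*=\tfrac{2H_b(a)}{3+a}$ and the identity $z_2=\tfrac{2a}{1+a}$; the continuity at $z_1$ then follows by symmetry. For the derivative match at $z_2$, I would check that $H_b'(z_2)=\log_2\tfrac{1-z_2}{z_2}$ equals the one-sided derivative of the right-piece formula at $z_2$, with $z_1$ again handled by symmetry. Given both ingredients, one applies Lemma \ref{lemma_ohad} first to concatenate the middle and right pieces on $[z_1,1]$, and then to concatenate the resulting function with the left piece on $[0,1]$, yielding continuity and concavity of $h^*$ on the entire interval.

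The main obstacle is the second-derivative calculation for the right piece, which is unpleasant because an affine factor multiplies a binary entropy with a Möbius argument. A convenient simplification is the change of variable $u=a(2-z)+z$, which linearizes both the affine multiplier and the inner argument of $H_b$; after this substitution, the sign condition reduces to a single algebraic inequality in $u$ and $a$ on a bounded interval, and the specific relation satisfied by $a$ as a root of $x^4-5x^3+6x^2-4x+1$ provides exactly the slack needed to close the inequality.
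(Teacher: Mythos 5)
Your proposal is correct and follows the same overall architecture as the paper: split $h^*$ into the three pieces on $[0,z_1]$, $[z_1,z_2]$, $[z_2,1]$, prove each piece concave, and glue them with Lemma \ref{lemma_ohad} (using the symmetry $h^*(z)=h^*(1-z)$ for the left piece). The one place you diverge is the term $\frac{2a+(1-a)z}{2(1-a)} H_b\left( \frac{2a}{a(2-z)+z}\right)$, which you rightly identify as the only non-trivial summand: you propose a direct second-derivative computation after the substitution $u=a(2-z)+z$, whereas the paper disposes of it in one line via the perspective property of concave functions --- the term is $\frac{1}{2(1-a)}\,u\,H_b(2a/u)$ with $u$ affine and increasing in $z$, and $u\mapsto u\,H_b(2a/u)$ is the perspective of $H_b$, hence concave. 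Your computational route does go through (one finds $\frac{d^2}{du^2}\left[u H_b(2a/u)\right] = \frac{-2a}{(\ln 2)\, u(u-2a)} < 0$ for $u>2a$), but note that your closing remark is misleading: the sign condition holds for \emph{every} $a\in(0,1)$ and does not require any slack from the quartic $x^4-5x^3+6x^2-4x+1$; the root property of $a$ plays no role in this corollary. The paper's structural argument makes that independence from $a$ transparent, while your version buys nothing extra at the cost of a longer calculation. The junction checks (value and one-sided derivative agreement at $z_1$ and $z_2$) that you spell out are indeed required by Lemma \ref{lemma_ohad}; the paper merely asserts they are easy to verify, so making them explicit is a point in your favor.
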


\begin{proof}
It is well known that the binary entropy function, $H_b(z)$, is concave in $z$. Thus, the function $h^*(z)$ for $z\in [z_1,z_2]$ is concave.
In order to show that $h^*(z)$ for $z\in [z_0,z_1]$ and for $z\in [z_2,z_3]$ is concave we first notice that
$\left(\frac{1}{1-a}\right) H\left(\frac{2a+(1-a)z}{2}\right)$ is concave since it is a composition of the binary entropy function and a linear, non-decreasing function of $z$.
Second, the expression  $-z+\frac{az-4a-z}{2(1-a)}\rho$ is also concave in $z$ since it is linear in $z$, and third,
the expression $\frac{2a+(1-a)z}{2(1-a)} H \left( \frac{2a}{a(2-z)+z} \right)$ is concave using the perspective property of concave functions.
Hence, the sum of the three expression is also concave, which implies that $h^*(z)$ is concave in $z$ for $z\in [z_0,z_1]$, $z\in [z_1,z_2]$, and for $z\in [z_2,z_3]$.
It is easy to verify that the function $h^*(z)$ is a concatenation of three functions that satisfies the conditions in Lemma \ref{lemma_ohad}.
Thus, we conclude that $h^*(z)$ is continuous and concave for all $z\in [0,1]$.
\end{proof}

Using the previous corollary we obtain the following:
\begin{lemma}
\label{lemma_concave}
Let $h(z)$ be a concave function. The expression given by
\begin{align*}
H\left(\frac{1}{2}+\frac{\delta-\gamma}{2}\right) +\delta +\gamma -1
+\frac{1+\delta-\gamma}{2}h\left( 1+\frac{\delta-z}{\delta+1-\gamma}\right)
+\frac{1-\delta+\gamma}{2} h\left(\frac{1-z-\gamma}{1+\gamma-\delta}\right)
\end{align*}
is concave in $(\delta,\gamma)$.
\end{lemma}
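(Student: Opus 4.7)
The plan is to decompose $g(\delta,\gamma)$ into four summands and argue concavity of each in $(\delta,\gamma)$; since the sum of concave functions is concave, this will finish the lemma. The first summand, $H_b\!\left(\tfrac12+\tfrac{\delta-\gamma}{2}\right)$, is the binary entropy composed with an affine map and is therefore concave, and the second summand $\delta+\gamma-1$ is affine and thus trivially concave. The work lies in the two remaining terms involving $h$.

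The key observation is that both of those terms are \emph{perspective functions} of $h$ in disguise. Recall that for any concave $f:\mathbb{R}\to\mathbb{R}$, the perspective $(x,t)\mapsto t\,f(x/t)$ is jointly concave on $\{t>0\}$. I would rewrite
\[
\frac{1+\delta-\gamma}{2}\,h\!\left(1+\frac{\delta-z}{1+\delta-\gamma}\right) \;=\; \frac{1}{2}\,b_1\,h\!\left(\frac{a_1}{b_1}\right),
\qquad a_1:=2\delta+1-\gamma-z,\ b_1:=1+\delta-\gamma,
\]
and
\[
\frac{1-\delta+\gamma}{2}\,h\!\left(\frac{1-z-\gamma}{1+\gamma-\delta}\right) \;=\; \frac{1}{2}\,b_2\,h\!\left(\frac{a_2}{b_2}\right),
\qquad a_2:=1-z-\gamma,\ b_2:=1+\gamma-\delta,
\]
where $(a_i,b_i)$ are affine functions of $(\delta,\gamma)$. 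Concavity of $h$ implies concavity of the perspective $(a_i,b_i)\mapsto b_i h(a_i/b_i)$ on $\{b_i>0\}$, and precomposition with the affine map $(\delta,\gamma)\mapsto(a_i(\delta,\gamma),b_i(\delta,\gamma))$ preserves concavity in $(\delta,\gamma)$.

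The main obstacle, and essentially the only delicate step, is recognizing this perspective structure (the surface form of the expression obscures it). A minor subtlety is that perspective concavity is stated on $\{b_i>0\}$, while on the natural action domain $0\le\delta\le z$, $0\le\gamma\le 1-z$ one can have $b_1=0$ or $b_2=0$ at isolated corner points. A quick check shows that at any such corner the corresponding $a_i$ also vanishes, so boundedness of $h$ lets us extend the perspective continuously (by $0$) to these points, and concavity extends to the closure. Summing the four concave pieces then yields the lemma, and as noted in the paragraph preceding the statement, the proof for an arbitrary finite concatenation of concave pieces with matching one-sided derivatives (which is what we need in order to apply this lemma to $h^*$ via Corollary \ref{cor_ohad}) then follows from Lemma \ref{lemma_ohad} by induction.
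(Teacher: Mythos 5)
Your proposal is correct and follows essentially the same route as the paper: the paper also splits off the entropy-plus-affine part and then proves joint concavity of each $h$-term by writing out, for a convex combination $(\delta,\gamma)=\alpha(\delta_1,\gamma_1)+(1-\alpha)(\delta_2,\gamma_2)$, exactly the Jensen inequality that underlies the perspective property — i.e., it proves by hand the fact you invoke by name. Your explicit check that $a_i$ vanishes wherever $b_i$ does on the corners of the action domain is a small point the paper leaves implicit, but it does not change the argument.
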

The proof of Lemma \ref{lemma_concave} is given in the Appendix.

Eventually, we obtain the concavity of the function $g(\delta,\gamma)$.
\begin{corollary}
\label{meme}
The function $g(\delta,\gamma)$ is concave in $(\delta,\gamma)$.
\end{corollary}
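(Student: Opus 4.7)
The plan is to obtain this corollary as an immediate specialization of the two results that precede it, with essentially no additional work. By the definition of $g(\delta,\gamma)$ introduced just before Lemma \ref{lemma_ohad}, we have
\begin{align*}
g(\delta,\gamma) &= H\left(\tfrac{1}{2}+\tfrac{\delta-\gamma}{2}\right) +\delta +\gamma -1 \\
&\quad +\tfrac{1+\delta-\gamma}{2}\,h^*\!\left( 1+\tfrac{\delta-z}{\delta+1-\gamma}\right) +\tfrac{1-\delta+\gamma}{2}\, h^*\!\left(\tfrac{1-z-\gamma}{1+\gamma-\delta}\right),
\end{align*}
which is precisely the expression appearing in Lemma \ref{lemma_concave}, with the generic concave function $h$ specialized to $h^*$.

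Consequently the proof will consist of two invocations. First I would appeal to Corollary \ref{cor_ohad}, which already establishes that $h^*(z)$ is continuous and concave on all of $[0,1]$; note that this is exactly the hypothesis needed in Lemma \ref{lemma_concave}. Second, I would invoke Lemma \ref{lemma_concave} itself with $h \leftarrow h^*$, which yields concavity of the displayed expression in $(\delta,\gamma)$ for each fixed $z$. Combining these two facts proves that $g(\delta,\gamma)$ is concave in $(\delta,\gamma)$.

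There is no real obstacle in this step: all the heavy lifting has been done in Corollary \ref{cor_ohad} (concatenation of three concave pieces with matching derivatives via Lemma \ref{lemma_ohad}, plus the perspective-function argument for the third piece) and in the generic concavity statement Lemma \ref{lemma_concave}. The only thing to be careful about in the writeup is to remark that $z$ is a fixed parameter in the definition of $g$, so that concavity is asserted in $(\delta,\gamma)$ alone, and to verify that the arguments $1+\tfrac{\delta-z}{\delta+1-\gamma}$ and $\tfrac{1-z-\gamma}{1+\gamma-\delta}$ remain in $[0,1]$ over the admissible region $0\le \delta \le z$, $0\le \gamma \le 1-z$, so that the concavity of $h^*$ on $[0,1]$ is genuinely applicable. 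With that verification, the corollary follows in one line from the two preceding results.
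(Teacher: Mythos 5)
Your proposal is correct and follows exactly the paper's own argument: the paper likewise deduces Corollary \ref{meme} by combining Corollary \ref{cor_ohad} (concavity of $h^*$) with Lemma \ref{lemma_concave} applied to $h=h^*$. Your added remark about verifying that the arguments of $h^*$ stay in $[0,1]$ over the admissible region is a reasonable extra care point, but the route is the same.
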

\begin{proof}
From Corollary \ref{cor_ohad} we have that $h^*(z)$ is a concave function.
Using Lemma \ref{lemma_concave} and the definition of $g(\delta,\gamma)$ we conclude that $g(\delta,\gamma)$  is concave in $(\delta,\gamma)$.
\end{proof}

The second main lemma shows that $\delta^*$ and $\gamma^*$ are optimal, which means that they maximize the function $g(\delta,\gamma)$.
First, we mention the KKT conditions adjusted to our problem:
\begin{lemma}[KKT conditions]
\label{cond}
Let $g(\delta,\gamma)$ be the objective function.
We consider the following optimization problem:
\begin{equation*}
\max_{\delta,\gamma} g(\delta,\gamma)
\end{equation*}
s.t.
\begin{equation*}
\gamma -1+z \leq 0,\quad -\gamma \leq 0,\quad \delta -z \leq 0,\quad - \delta \leq 0.
\end{equation*}
The Lagrangian of $g(\delta,\gamma)$ is $\mathcal{L}(\delta,\gamma,\lambda)= g(\delta,\gamma)-\lambda_1(\gamma-1+z)+\lambda_2\gamma-\lambda_3(\delta-z)+\lambda_4\delta$.
Since $g(\delta,\gamma)$ is a concave function then the following conditions are sufficient and necessary for optimality:
\begin{enumerate}[(1)]
\item $\frac{\partial g(\delta^*,\gamma^*)}{\partial \delta}=\lambda_3-\lambda_4$.
\item $\frac{\partial g(\delta^*,\gamma^*)}{\partial \gamma}=\lambda_1-\lambda_2$.
\item $\gamma\geq 0,\delta \geq 0$.
\item $\gamma-1+z\leq 0,\delta-z \leq 0$.
\item $\lambda_1,\lambda_2,\lambda_3,\lambda_4 \geq 0$.
\item $\lambda_1(\gamma-1+z)=\lambda_2\gamma=\lambda_3(\delta-z)=\lambda_4\delta=0$.
\end{enumerate}
\end{lemma}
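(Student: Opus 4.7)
The plan is to recognize this lemma as a direct application of the standard Karush--Kuhn--Tucker theorem for concave maximization over a convex set defined by affine inequalities. The key prerequisite --- concavity of the objective $g(\delta,\gamma)$ --- has already been established in Corollary~\ref{meme}, and all four constraints $\gamma-1+z\leq 0$, $-\gamma\leq 0$, $\delta-z\leq 0$, $-\delta\leq 0$ are manifestly affine in $(\delta,\gamma)$. Because the constraints are affine, the linear constraint qualification is automatic and no Slater-type argument is needed. Consequently, the proof will merely unpack what the KKT theorem says for this particular problem and verify that the six conditions in the lemma correspond, respectively, to stationarity of the Lagrangian (conditions 1 and 2), primal feasibility (conditions 3 and 4), dual feasibility (condition 5), and complementary slackness (condition 6).

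First I would establish necessity. Taking any maximizer $(\delta^*,\gamma^*)$ of $g$ over the stated feasible polytope, the classical first-order KKT theorem produces multipliers $\lambda_1,\lambda_2,\lambda_3,\lambda_4 \geq 0$ such that the gradient of $\mathcal{L}$ with respect to $(\delta,\gamma)$ vanishes at $(\delta^*,\gamma^*)$ and each $\lambda_i$ is zero whenever the corresponding constraint is inactive. Conditions (1) and (2) are the two coordinate components of this vanishing gradient; (3) and (4) are primal feasibility; (5) is dual feasibility; and (6) is complementary slackness. No property of $g$ beyond differentiability is used in this direction.

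Next I would establish sufficiency, which is the direction that actually uses concavity. Assuming that a tuple $(\delta^*,\gamma^*,\lambda_1,\lambda_2,\lambda_3,\lambda_4)$ satisfies all six conditions, for any feasible $(\delta,\gamma)$ concavity of $g$ gives
\begin{equation*}
g(\delta,\gamma) \leq g(\delta^*,\gamma^*) + \frac{\partial g(\delta^*,\gamma^*)}{\partial \delta}(\delta-\delta^*) + \frac{\partial g(\delta^*,\gamma^*)}{\partial \gamma}(\gamma-\gamma^*).
\end{equation*}
Substituting the stationarity equations (1) and (2), the first-order correction becomes $(\lambda_3-\lambda_4)(\delta-\delta^*)+(\lambda_1-\lambda_2)(\gamma-\gamma^*)$. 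Using complementary slackness (6) to eliminate the terms at $(\delta^*,\gamma^*)$, I can rewrite this as a nonnegative combination (by condition 5) of the four constraint functions evaluated at $(\delta,\gamma)$; primal feasibility of $(\delta,\gamma)$ then forces the combination to be nonpositive. Hence $g(\delta,\gamma)\leq g(\delta^*,\gamma^*)$, so $(\delta^*,\gamma^*)$ is a global maximizer.

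The ``main obstacle'' is essentially absent: this lemma is a textbook specialization of KKT and the only nontrivial hypothesis --- concavity of $g$ --- is already in hand via Corollary~\ref{meme}. The one subtlety I would flag is the degenerate behavior at $z\in\{0,1\}$, where the feasible region collapses to a segment or single point; I would dispatch this corner case by direct inspection rather than invoking the general theorem, and would also verify that the differentiability of $g$ needed for writing down the stationarity equations holds on the relative interior of the feasible set (which follows from the explicit formula in~\eqref{reward} together with the concavity established earlier).
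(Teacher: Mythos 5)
Your proposal is correct: it is the standard textbook specialization of the KKT theorem to a concave objective over affine constraints, with necessity from the (automatically satisfied) linear constraint qualification and sufficiency from the first-order concavity inequality combined with complementary slackness. The paper itself offers no proof of Lemma~\ref{cond} --- it simply states the conditions as a known consequence of concavity (Corollary~\ref{meme}) and the standard KKT theorem --- so your writeup fills in exactly the argument the paper implicitly relies on, and your extra care about the degenerate cases $z\in\{0,1\}$ and differentiability is a harmless refinement.
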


The optimality conditions is a conclusion from the KKT conditions and Corollary \ref{meme}.
\begin{lemma}
\label{ce}
The following optimality conditions hold:
\begin{enumerate}[(a)]
\item
\label{f}
If $z\in [\frac{2a}{1+a},1]$, $\frac{\partial g(\delta^*,\gamma^*)}{\partial \delta}=0$, and
$\frac{\partial g(\delta^*,\gamma^*)}{\partial \gamma}>0$ then $\delta^*,\gamma^*$ are optimal.
\item
\label{s}
If $z\in [\frac{1-a}{1+a},\frac{2a}{1+a}]$, $\frac{\partial g(\delta^*,\gamma^*)}{\partial \delta}>0$, and
$\frac{\partial g(\delta^*,\gamma^*)}{\partial \gamma}>0$ then $\delta^*,\gamma^*$ are optimal.
\end{enumerate}
\end{lemma}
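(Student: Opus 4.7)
The plan is to reduce Lemma \ref{ce} to a direct check of the KKT conditions listed in Lemma \ref{cond}, exploiting the concavity established in Corollary \ref{meme}, which makes those conditions both necessary and sufficient. Concretely, for each of the two intervals of $z$ I would (i) identify which of the four box constraints $\gamma\ge 0$, $\gamma\le 1-z$, $\delta\ge 0$, $\delta\le z$ are active at $(\delta^*,\gamma^*)$, (ii) set the multipliers of the inactive constraints to zero (so that complementary slackness holds automatically), and (iii) define the multipliers of the active constraints from the stationarity equations, verifying they come out nonnegative thanks to the hypotheses on the signs of $\partial g/\partial\delta$ and $\partial g/\partial\gamma$.

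For case (\ref{f}), $z\in[\tfrac{2a}{1+a},1]$: since $\gamma^*(z)=1-z$, the upper $\gamma$-constraint is active while the lower one is slack, giving $\lambda_2=0$; since $\delta^*(z)=a(2-z)$ lies strictly inside $(0,z)$ on the interior of the interval (I would verify the two endpoint computations $\delta^*(\tfrac{2a}{1+a})=\tfrac{2a}{1+a}$ and $\delta^*(1)=a<1$), both $\delta$-constraints are slack in the interior, so $\lambda_3=\lambda_4=0$. The stationarity equations of Lemma \ref{cond} then read $\partial g/\partial\delta=0$ and $\partial g/\partial\gamma=\lambda_1$, which match the two hypotheses, with $\lambda_1>0$ read off directly. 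For case (\ref{s}), $z\in[\tfrac{1-a}{1+a},\tfrac{2a}{1+a}]$: now both $\gamma^*(z)=1-z$ and $\delta^*(z)=z$ hit their upper bounds (and remain strictly positive, since $z\in(0,1)$ here), so $\lambda_2=\lambda_4=0$ while $\lambda_1,\lambda_3$ are free. The stationarity conditions become $\partial g/\partial\delta=\lambda_3$ and $\partial g/\partial\gamma=\lambda_1$; the positivity hypotheses immediately give $\lambda_1,\lambda_3>0$, and complementary slackness is automatic because the corresponding constraints are active.

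With all six KKT conditions verified in each case, Lemma \ref{cond} delivers optimality of $(\delta^*,\gamma^*)$. I do not plan to compute $\partial g/\partial\delta$ or $\partial g/\partial\gamma$ explicitly here, because the lemma assumes their signs as hypotheses (these will be checked separately as part of the proof of Lemma \ref{mean1}, using the explicit form of $h^*$ from (\ref{thefunction}) and the defining polynomial of $a$).

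The only genuinely nontrivial point in this proof is confirming that the constraint pattern I described is correct on the entire interval, in particular that $\delta^*(z)=a(2-z)\in(0,z)$ on the open interval $(\tfrac{2a}{1+a},1)$; this is a one-line monotonicity check using $a\in(\tfrac13,1)$. Everything else is bookkeeping: assigning multipliers and invoking Corollary \ref{meme} so that KKT sufficiency applies. Thus the main obstacle is not here but in the companion derivative computations needed to justify the sign hypotheses of Lemma \ref{ce}, which are the real content of the optimality argument for Theorem \ref{main_result}.
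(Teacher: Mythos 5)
Your proposal is correct and follows essentially the same route as the paper: both verify the KKT conditions of Lemma \ref{cond} by assigning zero multipliers to slack constraints and reading the active-constraint multipliers off the stationarity equations, with Corollary \ref{meme} supplying sufficiency. Your version is in fact slightly more careful than the paper's on two small points — you explicitly check primal feasibility ($\delta^*(z)=a(2-z)\in[0,z]$ on $[\tfrac{2a}{1+a},1]$), and you pair the multipliers with the correct stationarity equations ($\lambda_3$ with $\partial g/\partial\delta$, $\lambda_1$ with $\partial g/\partial\gamma$), whereas the paper's case (b) appears to swap them.
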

\begin{proof}
First, we consider case (\ref{f}) in which $z\in [\frac{2a}{1+a},1]$:\\
if we take $\lambda_2=\lambda_3=\lambda_4=0$, $\lambda_1=\frac{\partial g(\delta^*,\gamma^*)}{\partial \gamma}$, and $\gamma^*=1-z$, the KKT conditions, which are given in Lemma \ref{cond} hold since $\lambda_1\geq 0$.

Second, we consider case (\ref{s}) in which $z\in [\frac{1-a}{1+a},\frac{2a}{1+a}]$:\\
if we take  $\lambda_2=\lambda_4=0$, $\lambda_1=\frac{\partial g(\delta^*,\gamma^*)}{\partial \delta}$,
$\lambda_3=\frac{\partial g(\delta^*,\gamma^*)}{\partial \gamma}$, $\delta^*=z$, and $\gamma^*=1-z$, the KKT conditions hold since $\lambda_1\geq 0$ and $\lambda_3\geq 0$.
\end{proof}

Using Corollary \ref{meme} and Lemma \ref{ce} we prove Lemma \ref{mean1}.
\begin{proof}[Proof of Lemma \ref{mean1}]
From Corollary \ref{meme} we have that $g(\delta,\gamma)$ is concave in $(\delta,\gamma)$.
Thus, the KKT conditions are sufficient and necessary.
First we assume that $z\in [\frac{2a}{1+a},1]$.
We note that the expression $1+\frac{\delta-\gamma}{1+\delta-\gamma}$ is in $[\frac{2a}{1+a},1]$.
Furthermore, replacing $\gamma,\delta$ with $\gamma^*,\delta^*$ respectively, we find $\frac{1-z-\gamma}{1+\gamma-\delta}$ to be $0$.
We differentiate $g(\delta,\gamma)$ with respect to $\delta$ and evaluate it in $(\delta^*,\gamma^*)$:
\begin{eqnarray}
\frac{\partial g(\delta^*,\gamma^*)}{\partial \delta}&=&\frac{2\rho^*+\log_2(a)}{a-1} \nonumber 
\end{eqnarray}
Using basic algebra we find that the expression $\frac{2\rho^*+\log_2(a)}{a-1}$ is equal to zero iff $a^4-5a^3+6a^2-4a+1=0$.
Thus, setting $a\approx 0.4503$ to be the unique real root in the interval $[0,1]$ of the polynomial
$x^4-5x^3+6x^2-4x+1$ we establish that
\begin{eqnarray}
\frac{\partial g(\delta^*,\gamma^*)}{\partial \delta}=\frac{2\rho^*+\log_2(a)}{a-1}=0.
\end{eqnarray}
Now we differentiate $g(\delta,\gamma)$ with respect to $\gamma$ when $a\approx 0.4503$.
We find that the derivative is strictly positive
\begin{eqnarray}
\frac{\partial g(\delta^*,\gamma^*)}{\partial \gamma}&>&0.
\end{eqnarray}
Note that the derivative is positive when $a\leq 0.9$.
This can be seen since $\frac{\partial g(\delta^*,\gamma^*)}{\partial \gamma}$ is a monotonically increasing function of $a$, which for $a\leq 0.9$, equal to zero for $z < \frac{2a}{1+a}$.
Since we have found $a$ to be approximately $0.4503$ we have that $\frac{\partial g(\delta^*,\gamma^*)}{\partial \gamma}>0$.
Using Lemma \ref{ce} case (\ref{f}) we conclude that $\delta^*,\gamma^*$ are optimal.
%
The analysis for $z\in[0,\frac{1-a}{1+a}]$ is completely analogous.

Second, we assume $z\in[\frac{1-a}{1+a},\frac{2a}{1+a}]$.
In this case, we have that $h^*(z)=H_b(z)$. Using simple algebra we obtain that
\begin{eqnarray}
\frac{\partial g(\delta^*,\gamma^*)}{\partial \delta}&>&0 \nonumber \\
\frac{\partial g(\delta^*,\gamma^*)}{\partial \gamma}&>&0.
\end{eqnarray}
Using Lemma \ref{ce} case (\ref{s}) we conclude that $\delta^*,\gamma^*$ are optimal.
Thus, for all $z\in[0,1]$ we have that $\delta^*,\gamma^*$ are optimal.
\end{proof}

We now prove Theorem \ref{main_result}.
In the proof we show that $Th^*(z)=h^*(z)+\rho^*$, $\rho^*=\frac{2H_b(z)}{3+a}$, $\delta,\gamma$ which maximize the operator $T$ are $\delta^*,\gamma^*$, and $h^*(z)$ is given in (\ref{thefunction}).
This implies that $h^*(z)$ solves the Bellman Equation.
Therefore, $\rho^*$ is the optimal average reward, which is equal to the channel capacity.

%

First, we prove Theorem \ref{main_result}(a).
\begin{proof}[Proof of Theorem \ref{main_result}(a)]
We would like to prove that the capacity of the the Ising channel with feedback is
$C_f=\left(\frac{2H(a)}{3+a} \right)\approx 0.5755$ where $a\approx 0.4503$ is a specific root of
the fourth-degree polynomial $x^4-5x^3+6x^2-4x+1$.
According to Theorem \ref{Bellman}, if we identify a scalar $\rho$ and a bounded function $h(z)$ such that
\begin{equation}
\rho + h(z) = \sup_{u \in {\cal U}}
\left(g(z,u) + \int P_w(dw | z,u) h(F(z,u,w))\right) \: \: \forall
z\in{\cal Z}
\end{equation}
then $\rho=\rho^*$.
Using Lemma \ref{mean1} we obtain that $\delta^*$ and $\gamma^*$, as defined in (\ref{gamma}),(\ref{delta}), maximize $Th^*(z)$ when $a\approx 0.4503$.
In addition, we show that $h^*(z)$, which is defined in (\ref{thefunction}), satisfies the Bellman Equation, $Th^*(z)=h^*(z)+\rho^*$, where $\rho^*=\frac{2H(a)}{3+a}$.
This follows from Lemma \ref{mid_section_h} and Lemma \ref{end-section-h}.
Therefore, we have identified a bounded function, $h^*(z)$, and a constant, $\rho^*$, together with a policy,
$\gamma^*,\delta^*$, which satisfy the Bellman Equation.
Thus, the capacity of the Ising channel with feedback is
$\rho^*=h^*(0)=h^*(1)=\frac{2H(a)}{3+a}\approx 0.575522$.
\end{proof}

Now we prove Theorem \ref{main_result}(b). The proof is straightforward.
\begin{proof}[Proof of Theorem \ref{main_result}(b)]
We define $g(z)=\frac{2H_b(z)}{3+z}$ and we calculate $g'(z)=\frac{8\log_2(1-z)-6\log_2(z)}{(3+z)^2}$.
$8\log_2(1-z)-6\log_2(z)=0$ iff $(1-z)^8-z^6=0$.
The polynomial $(1-z)^8-z^6=0$ is reducible, hence we can write $(1-z)^8-z^6=(1-4z+6z^2-3z^3+z^4)(1-4z+6z^2-5z^3+z^4)$.
Therefore, $g'(a)=0$ since $a\approx 0.4503$ is the root of the polynomial $x^4-5x^3+6x^2-4x+1$.
It is easy to verify that $g'(a-\epsilon)>0$ and $g'(a+\epsilon)<0$.
Together with the fact that $a$ is the only real number in $[0,1]$ which sets $g'(z)$ to zero,
$a$ is a maximum point of $g(z)$, for $0\leq z \leq 1$.
\end{proof}
\section{Relation of the DP Results and the Coding Scheme}
\label{seceight}
In this section we analyse the DP results and derive the coding scheme from these results.
Especially, we use the histogram of $z$, which is presented in Fig. \ref{4plots}.
However, we first recap a few definitions that where used in this paper:
\begin{enumerate}
\item $z_t=p(s_t=0|y^t)$, where $s_t=x_t$.
This means that $z_t$ is the probability of the input $x_t$ being $0$ given the output.
Thus, if $z_t=0$, then $x_t=1$ with probability 1.
\item  We defined $\delta=z_{t-1}u_t(1,1)$ and $\gamma=(1-z_{t-1})u_t(2,2)$, where $u_t(1,1)=\Pr(x_t=0|s_{t-1}=0)$ and $u_t(2,2)=\Pr(x_t=1|s_{t-1}=1)$.
\item We established that
\begin{equation}
\gamma^*(z)=\left\{ \begin{array}{ll}
a+az,& \text{if } z\in [p_0,p_1]\\
1-z,& \text{if } z\in [p_1,p_3]
\end{array} \right.
\end{equation}
\\
\begin{equation}
\delta^*(z)=\left\{ \begin{array}{ll}
z,& \text{if } z\in [p_0,p_2]\\
a(2-z),& \text{if } z\in [p_2,p_3].
\end{array} \right.
\end{equation}
\item We also established the equation in (\ref{11}):
\begin{equation}
z_t=\left\{ \begin{array}{ll}
1+\frac{\delta_t -z_{t-1}}{1+\delta_t -\gamma_t},& \text{if } w_t=0\\
\frac{1-z_{t-1}-\gamma_t}{1+\gamma_t -\delta_t},& \text{if } w_t=1.
\end{array} \right.
\end{equation}

\end{enumerate}

We also remind that in the histogram, which is presented in Fig. \ref{4plots}, $z_t$ alternates between four points, two of which are $0$ and $1$.
In order to keep in mind that these points stand for probability we denote them as $p_0,p_1,p_2,p_3$, where $p_0=0$ and $p_3=1$.
Using (\ref{11}) and the definition of $\gamma^*,\delta^*$ we can derive Table \ref{code_table}.
The table presents $z_{t+1}$ as a function of $z_t$ and $y_{t+1}$.
It also presents the optimal action parameters, $u_t(1,1),u_t(2,2)$, for each state.
The action parameters are calculated from the parameters $\delta^*,\gamma^*$.

\begin{table}[h!]
\caption{The DP states at time $t+1$ as a function of the previous state and the output calculated using (\ref{11}).
         The table presents the optimal actions for each state.} \centering
\label{code_table}
\begin{tabular}[h]{||c||c|c|c|c||}
\hline \hline
    & $z_t=p_0$ & $z_t=p_1$ & $z_t=p_2$ & $z_t=p_3$\\
\hline \hline
$y_t=0$ & $z_{t+1}=p_3$ & $z_{t+1}=p_3$ & $z_{t+1}=p_3$ & $z_{t+1}=p_2$\\
\hline
$y_t=1$ & $z_{t+1}=p_1$ & $z_{t+1}=p_0$ & $z_{t+1}=p_0$ & $z_{t+1}=p_0$\\
\hline
$u_{t+1}(2,2)$ & $a$ & $1$ & $1$ & irrelevant\\
\hline
$u_{t+1}(1,1)$ & irrelevant & $1$ & $1$ & $a$\\
\hline \hline
\end{tabular}.
\end{table}

Assume at first that at time $t-1$ the state is $z_{t-1}=p_0=0$.
\begin{enumerate}
\item
\emph{Decoder:} Using the definition of $z_{t-1}$ we deduce that $p(s_{t-1}=0|y^{t-1})=0$ and hence $x_{t-1}=1$ with probability $1$.
Thus, the decoder decodes $1$.
\item
\emph{Encoder:} The optimal actions are $\delta_t^*(0)=0$ and $\gamma_t^*(0)=a$.
Using the definition of $\gamma^*$ we conclude that $\Pr(x_t=1|s_{t-1}=1)=a$.
Thus, $\Pr(x_t=0|s_{t-1}=1)=1-a$, which means that, given that $s_{t-1}=x_{t-1}=1$, the probability to send $1$ again is $a$.
This result gives us the alternation probability from $1$ to $0$, which is $1-a$.
Since $s_{t-1}=x_{t-1}=1$ with probability $1$, the action parameter $\delta^*$ is irrelevant because it concerns the case in which $s_{t-1}=0$.
Indeed, using the definition of $\delta^*$, we can see that $\delta_t^*(0)=0$.
\end{enumerate}

We now use Table \ref{code_table} in order to find the next state.
We have two options; if the output is $0$ we move to state $p_3=1$.
For this state the analysis is similar to the state $p_0$, switching between $0$ and $1$.
Note that since the next state is $p_3=1$ the decoder decodes the bit which was sent.
If, on the other hand, the output is $1$ we move to the state $z_t=p_1$.
Assuming $z_t=p_1$ we have the following:
\begin{enumerate}
\item
\emph{Decoder:} Using the definition of $z_t$ we deduce that $p(s_{t}=0|y^{t})=p_1$ and hence $x_{t}=1$ with probability $p_1$.
Thus, the decoder does not decode and waits for the next bit.
\item
\emph{Encoder:} The optimal actions are $\delta_{t+1}^*(p_1)=p_1$ and $\gamma_{t+1}^*(p_1)=1-p_1$.
Using the definition of $\gamma^*$ we conclude that $\Pr(x_{t+1}=1|s_{t}=1)=1$
and using the definition of $\delta^*$ we conclude that $\Pr(x_{t+1}=0|s_{t}=0)=1$.
This means that $x_{t+1}=s_t=x_t$ with probability $1$.
\end{enumerate}
The analysis for state $p_2$ is done in a similar way.
\begin{figure}[h!]

\begin{center}
\begin{psfrags}
    \psfragscanon
        \psfrag{A}[][][0.9]{DP state $p_0$}
        \psfrag{B}[][][0.9]{ }
    \psfrag{C}[][][0.9]{De: $p(x_t=0|y^t)=0$}
    \psfrag{D}[][][0.9]{En: $\Pr(x_{t+1}=x_t)=a$}
        \psfrag{E}[][][0.9]{DP state $p_3$}
    \psfrag{F}[][][0.9]{ }
    \psfrag{G}[][][0.9]{De: $p(x_t=0|y^t)=1$}
    \psfrag{H}[][][0.9]{En: $\Pr(x_{t+1}=x_t)=a$}
        \psfrag{I}[][][0.9]{DP states $p_1,p_2$}
    \psfrag{J}[][][0.9]{ }
      \psfrag{K}[][][0.9]{De: Waits}
        \psfrag{L}[][][0.9]{En: $x_{t+1}=x_t$}
    \psfrag{m}[][][0.9]{$y_{t+1}=1$}
      \psfrag{n}[][][0.9]{$y_{t+1}=0$}
      \psfrag{u}[][][0.9]{$y_{t+1}=1$}
    \psfrag{p}[][][0.9]{$y_{t+1}=1$}
    \psfrag{q}[][][0.9]{$y_{t+1}=0$}
    \psfrag{r}[][][0.9]{$y_{t+1}=0$}

\includegraphics[width=14cm]{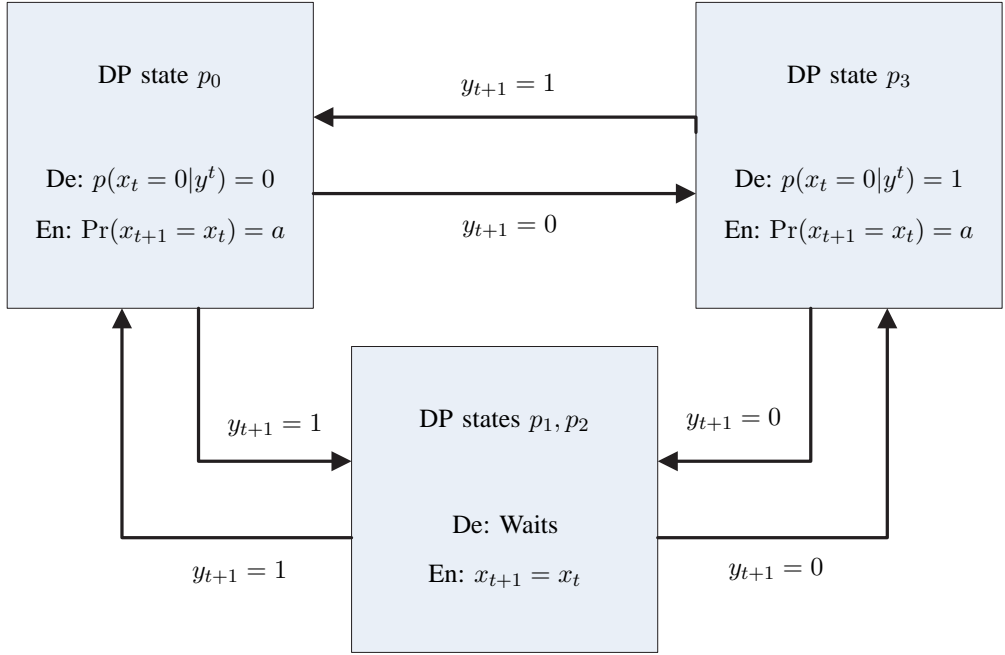}
\caption{The coding graph for the capacity-achieving coding scheme at time $t$. The states,
$p_i\quad i=0,1,2,3$ are the DP states where $p_0=0,p_3=1$. The labels on the arcs represents the
output of the channel at time $t+1$. The decoder and the encoder rules, which are written in
vertices of the graph, yield the coding scheme presented in Theorem \ref{main_result_code}. }
\label{coding_graph} \psfragscanoff
\end{psfrags}
\end{center}
\end{figure}


We can now create a coding graph for the capacity-achieving coding scheme.
Decoding only when the states are $p_0$ or $p_3$ results in a zero-error decoding.
The coding graph is presented in Fig. \ref{coding_graph}.
In the figure we have three vertices, which corresponds to the DP states.
At each vertex we mention the corresponding state or states, the decoder action, and the encoder action.
The edges' lables are the output of the channel.
The edges from vertices $p_0$ to $p_3$ and vice versa corresponds to case $(1.1)$ in Theorem \ref{main_result_code} in the encoder scheme and
to case $(1.1)$ in Theorem \ref{main_result_code} in the decoder scheme.
The edges between vertices $p_0$ and $p_1,p_2$ and between $p_3$ and $p_1,p_2$ correspond to case $(1.2)$ in Theorem \ref{main_result_code}
in the encoder scheme and to case $(1.2)$ in Theorem \ref{main_result_code} in the decoder scheme.

\section{Capacity-Achieving Coding Scheme Analysis}
\label{secseven}

In this section we show that the coding scheme presented in Theorem \ref{main_result_code} and in the previous section indeed achieves the capacity.
In order to analyze the coding scheme, we would like to present the same coding scheme in a different way.
One can see that the scheme that presented in Theorem \ref{main_result_code} resembles to the zero-error coding scheme
for the Ising channel without feedback, which has been found by Berger and Bonomi
\cite{CapacityandzeroerrorcapacityofIsingchannel}.
The zero-error capacity-achieving coding scheme for the Ising channel without feedback is simple:
the encoder sends each bit twice and the decoder considers every other bit.
The channel topology ensures that every second bit is correct with probability $1$.
Moreover, it is clear that this scheme achieves the zero-error capacity of the Ising channel
without feedback, which is $0.5$ bit per channel use.
Thus, one can see the intuition behind the coding scheme suggested in Theorem \ref{main_result_code}.

The following proof shows that the coding scheme presented in Theorem \ref{main_result_code} indeed achieves
the capacity. In the proof we calculate the expected length of strings in the channel input and divide
it by the expected length of strings in the channel output.

\begin{proof}[Proof of Theorem \ref{main_result_code}]
Let us consider an encoder that contains two blocks, as in Fig. \ref{encoder}.
The first block is a data encoder. The data encoder receives a message $M^n$ ($M=\{0,1\}$) of length $n$
distributed i.i.d. Bernoulli $\left(\frac{1}{2}\right)$ and transfers it to a string of data
with probability of alternation between $1$ and $0$ and vice versa of $q$.
This means that if some bit is $0$ (alternatively $1$), the next bit is $1$ (alternatively $0$) with probability $q$.
In order to create a one-to-one correspondence between the messages and the data strings we need
the data strings to be longer than $n$.

\begin{figure*}[h!]
\begin{center}
\begin{psfrags}
    \psfragscanon
    \psfrag{a}[][][0.9]{Encoder}
    \psfrag{b}[][][0.9]{Data}
        \psfrag{h}[][][0.9]{Encoder}
    \psfrag{c}[][][0.9]{Channel}
        \psfrag{i}[][][0.9]{Encoder}
    \psfrag{d}[][][0.9]{$m$}
    \psfrag{e}[][][0.9]{$d_t$}
    \psfrag{f}[][][0.9]{$x_t$}
    \psfrag{g}[][][0.9]{$y_{t-1}$}

\includegraphics[width=15cm]{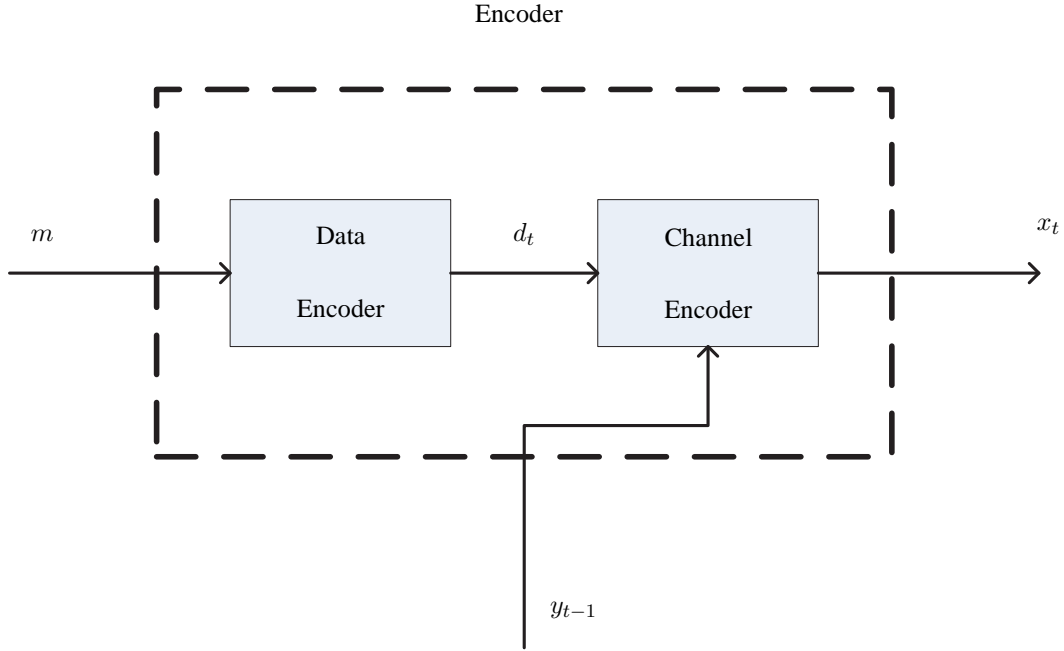}
\caption{The channel encoder block which consists of two sub-encoders. One block encodes data and the other
performs the channel encoding.  }
\psfragscanoff
\label{encoder}
\end{psfrags}
\end{center}
\line(1,0){515}
\end{figure*}
Let us calculate the length of the data strings needed in order to transform the message into
a string with alternation probability of $q$.
We notice that $p(x_t|x^{t-1})=p(x_t|x_{t-1})$. Thus, the entropy rate is
\begin{eqnarray}
\lim_{n\rightarrow \infty} \frac{H\left(X^n\right)}{n} &\stackrel{(a)}{=}& \lim_{n\rightarrow \infty}
\frac{1}{n}\sum_{i=1}^{n}H\left(X_i|X_{i-1}\right) \nonumber \\
&=& H\left( X_i|X_{i-1}\right) \stackrel{(b)}{=} H\left(q\right)
\end{eqnarray}
where
\begin{itemize}
\item $(a)$ is due to the chain rule and since $p(x_t|x^{t-1})=p(x_t|x_{t-1})$.
\item $(b)$ is due to the fact that the probability of alternation is $q$.
\end{itemize}
Therefore, given a message of length $n$, the data encoder transfers it into a data string of length
$\frac{n}{H_b(q)}$. This can be done using the method of types for the binary sequence.
Given a probability $q$ and a binary sequences of length $n$, the size of the typical set is about $2^{nH_b(q)}$.
Hence, we can map the set of Bernoulli $\frac{1}{2}$ sequences of length $n$ (which is of size $2^n$) to the set
of sequences of length $\frac{n}{H_b(q)}$ with alternation probability $q$ (which is of size $2^{\frac{n}{H_b(q)}H_b(q)}=2^n$).
One can also use the mapping presented in \cite{enumerative}.
This mapping gives a simple way to enumerate the indexes of Markov sequences of length $\frac{n}{H_b(q)}$ and
of the binary sequences of length $n$, which distributed Bernoulli $(\frac{1}{2})$.
One can enumerate these sequences and establish a mapping from the Bernoulli sequences to the Markov sequences simply by matching their indexes.

The second block is the channel encoder. This encoder receives a data string in which the probability
of alternation between $0$ to $1$ and vice versa is $q$.
This sequence passes through the encoder, which sends some bits once and some bits twice. Due to that property, the transmitted
bit at time $t$ is not necessarily the data bit at the $t$th location. This is why the encoder scheme uses two time indexes, $t$
and $t'$, which denote the data bit location and the current transmission time, respectively.
The encoder works as mentioned in Theorem \ref{main_result_code}:
\begin{enumerate}[(1)]
\item
\emph{Encoder:} At time $t'$, the encoder knows $s_{t'-1}=x_{t'-1}$; it is clear from the encoder description that $x_{t'-1}=m_{t-1}$ and we send the bit $m_t$ $(x_{t'}=m_t)$:
\begin{enumerate}[({1.}1)]
\item If $y_{t'}\neq s_{t'-1}$ then move to the next bit, $m_{t+1}$. This means that we send $m_t$ once.
The probability that $y_{t'}\neq s_{t'-1}$ is the probability that $x_{t'}\neq x_{t'-1}$ and $y_{t'}=x_{t'}$, namely $q\cdot \frac{1}{2}$.
\item If $y_{t'}=s_{t'-1}$ then $x_{t'}=x_{t'+1}=m_t$, which means that the encoder sends $m_t$ twice (at time $t'$ and $t'+1$) and then move to the next bit.
The probability that $y_{t'}=s_{t'-1}$ is the probability that $x_{t'}= x_{t'-1}$ or $x_{t'}\neq x_{t'-1}$ and $y_{t'}\neq x_{t'}$, namely $(1-q)+\frac{1}{2}q=\frac{2-q}{2}$.
\end{enumerate}
\end{enumerate}

Now we calculate the expected length of the channel encoder output string.
First, the message is of length $n$ and distributed Bernoulli $\frac{1}{2}$.
Thus, the length of the string which has alternation probability of $q$ is $\frac{n}{H(q)}$.
Then, with probability $\frac{2-q}{2}$ we send two bits and with probability $\frac{q}{2}$ we send one bit.
Hence, the expected length of the channel encoder output string is
$\frac{n}{H_b(q)}\left(2\frac{2-q}{2}+\frac{q}{2}\right)=\frac{4-q}{2}\frac{n}{H_b(q)}$.

We send $2^n$ messages in $\frac{1-q}{2}\frac{n}{H_b(q)}$ transmissions, hence the rate is
$\frac{n}{\frac{4-q}{2}\frac{n}{H_b(q)}}=\frac{2H(q)}{4-q}$.
Setting $q=1-a$ we achieve the rate $\frac{2H_b(1-a)}{3+a}=\frac{2H(a)}{3+a}$.
This is true for any $a\in [0,1]$, in particular it holds for the
unique positive root in $[0,1]$ of the polynomial $x^4-5x^3+6x^2-4x+1$.
Using Theorem \ref{main_result}, the expression $\frac{2H(a)}{3+a}$ is equal to the capacity
of the Ising channel with feedback. This means that the scheme achieves the capacity.
\end{proof}

The following illustration shows the channel encoding and decoding scheme.
Assume we are to send the data string $0110$ and the state at time $t'=0$ is $0$.
Following the channel encoder rules, we present the encoder output, $x_{t'}$, in Table \ref{encoding}.
The table shows in each time, $t'=1,2,3,\cdots $, the channel input $x_{t'}$ and output $y_{t'}$.
The right-most column of the table refers to the data string bit that is currently encoded (denoted with the time
index $t$ in the decoder scheme), from first to fourth.
\begin{table}[h]
\caption{Example for encoding the word $0110$ where we assume channel output using the channel topology.
The "encoded bit" represents the bit we currently encode ($1,2,3,4$).} \centering
\label{encoding}
\begin{tabular}[h]{||c|c|c|c|c|c||}
\hline \hline
time & channel & channel & channel & encoded & case \\
 $t'$& state & input & output & bit & \\
 & $s_{t'}=x_{t'-1}$& $x_{t'}$& $y_{t'}$& $t$ & \\
\hline \hline
1 & 0 & 0 & 0 (w.p 1) & 1 & 1.2\\
\hline
2 & 0 & 0 & 0 (w.p 1) & 1 & 1.2\\
\hline
3 & 0 & 1 & 1 (w.p $\frac{1}{2}$) & 2 & 1.1\\
\hline
4 & 1 & 1 & 1 (w.p 1) & 3 & 1.2\\
\hline
5 & 1 & 1 & 1 (w.p 1) & 3 & 1.2\\
\hline
6 & 1 & 0 & 1 (w.p $\frac{1}{2}$)) & 4 & 1.2\\
\hline
7 & 0 & 0 & 0 (w.p 1) & 4 & 1.2\\
\hline \hline
\end{tabular}
\end{table}
In this example the input to the channel is $0011100$ and the output of the channel is $0011110$.
Note that in the decoding process on the 3rd step, the channel state is $0$ and the channel input is $1$;
since the channel output was also $1$ this bit was sent only once.
Now we follow the decoder rules in order to decode the received word $0011110$.
We remind that the decoder rules are as follows:
\begin{enumerate}[(1)]
\item
\emph{Decoder:} At time $t'$, assume the state $s_{t'-1}$ is known at the decoder and we are to decode the bit $\hat{m}_t$:
\begin{enumerate}[({1.}1)]
\item If $y_{t'}\neq s_{t'-1}$ then $\hat{m}_t=y_{t'}$ and $s_{t'}=y_{t'}$.
\item If $y_{t'}=s_{t'-1}$ then wait for $y_{t'+1}$, $\hat{m}_t=y_{t'+1}$ and $s_{t'}=y_{t'+1}$.
\end{enumerate}
\end{enumerate}
Table \ref{decoding} presents the decoder decisions made in each time $t$.
The second column represents the channel output (the decoder input), $y_t$.
In the third column, the channel state, $s_t$, from the decoder point of view is presented.
The question mark stands for an unknown channel state. In this situation, the decoder cannot decode the
bit that was sent and it has to wait for the next bit.
The action column records, in each time, the action made by the decoder, which can decode or take no action
and wait for the next bit to arrive. In the last column the decoded string is presented.
\begin{table}[h!]
\caption{Example for decoding the word $0011110$ where we use the decoding rules.
The channel state is given from the decoder's point of view.
Since the decoder decodes only when the state is known with probability $1$, we denote bits we cannot yet decode by a question mark.} \centering
\label{decoding}
\begin{tabular}[h]{||c|c|c|c|c|c||}
\hline \hline
time & channel & channel & action & decoded & case\\
 $t'$& output $y_{t'}$ & state $s_{t'}$ &  & word & \\
\hline \hline
1 & 0 & ? & none & ? & 1.2\\
\hline
2 & 0 & 0 & decode & 0 & 1.2\\
\hline
3 & 1 & 1 & decode & 01 & 1.1\\
\hline
4 & 1 & ? & none & 01? & 1.2\\
\hline
5 & 1 & 1 & decode & 011 & 1.2\\
\hline
6 & 1 & ? & none & 011? & 1.2\\
\hline
7 & 0 & 0 & decode & 0110 & 1.2\\
\hline \hline
\end{tabular}
\end{table}
Following the decoder rules, we have decoded the correct word $0110$.
As we can see, using this coding scheme we can decode the word instantaneously with no errors.

An interesting fact is that in order to achieve the capacity using this coding scheme, we do not need
to use the feedback continuously.
It is enough to use the feedback only when there is an alternation between $0$ to $1$ (or vice versa) in the bits we send.
When there is no alternation, the feedback is not needed since the bit is sent twice regardless of the channel output.
Several cases of partial feedback use are studied in \cite{tofeedornottofeed}.

\section{Conclusions}
\label{secnine}

We have derived the capacity of the Ising channel, analyzed it and presented a simple capacity-achieving
coding scheme.
As an immediate result of this work we can tighten the upper bound for the capacity of the one-dimensional
Ising Channel to be $0.575522$, since the capacity of a channel without feedback cannot exceed the capacity
of the same channel with feedback.

A DP method is used in order to find the capacity of the Ising channel with feedback.
In the case presented in this paper, we have also established a connection between the DP results and the capacity-achieving coding scheme.
An interesting question that arises is whether there exists a general method for finding the capacity for
two states channels with feedback, whose states are a function of the previous state, the input,
and the previous output. It may be the case that the solution of the DP for such a channel has a fixed pattern.
Towords this goal, a new coding scheme is provided in \cite{new} for unifilar finite state channels that is based on posterior matching.

\appendix

\begin{proof}[Proof of Lemma \ref{lemma_ohad}]
the function $\eta(x)$ is continuous by definition, since
$f$ and $g$ are continuous and $f(\beta)=g(\beta)$.
We continue the function $f(x)$ on $[\beta,\gamma]$ with a straight line
with incline $f'_-(\beta)$ and the function $g(x)$ with a straight line
with incline $g'_+(\beta)$ as in Fig. \ref{concatenate}. We define
\begin{equation}
f_1(x)=\left\{ \begin{array}{ll}
f(x),& \text{if } x\in [\alpha,\beta]\\
(x-\beta)f'_-(\beta)+f(\beta),& \text{if } x\in [\beta,\gamma]
\end{array} \right.
\end{equation}
and
\begin{equation}
g_1(x)=\left\{ \begin{array}{ll}
(x-\beta)g'_+(\beta)+g(\beta),& \text{if } x\in [\alpha,\beta]\\
g(x),& \text{if } x\in [\beta,\gamma].
\end{array} \right.
\end{equation}
The functions $f_1(x)$ and $g_1(x)$ are concave since we continue the functions with a straight line.
Since $f(x)$ and $g(x)$ are concave we have $f(x)\leq (x-\beta)f'_-(\beta)+f(\beta)$ for all $x\in [\alpha,\beta]$
and $g(x)\leq (x-\beta)g'_+(\beta)+g(\beta)$ for all $x\in [\beta,\gamma]$.
Hence, for all $x\in [\alpha,\beta]$ we have $g_1(x)\geq f(x)$ and for all $x\in [\beta,\gamma]$ we have
$f_1(x)\geq g(x)$.
Therefore, $\eta(x)=\min \{f_1(x),g_1(x)\}$ and, since the minimum of two concave functions
is a concave function\cite{convex_optimization_boyd}, $\eta(x)$ is concave.
\end{proof}

\begin{figure*}[t]
\begin{center}
\begin{psfrags}
    \psfragscanon
    \psfrag{a}[][][0.9]{Function 1}
    \psfrag{b}[][][0.9]{Function 2}
    \psfrag{c}[][][0.9]{$f(x)$}
        \psfrag{d}[][][0.9]{$g(x)$}
    \psfrag{e}[][][0.9]{$f_1(x)$}
    \psfrag{f}[][][0.9]{$g_1(x)$}
        \psfrag{g}[][][0.9]{concatenation point}
        \psfrag{h}[][][0.9]{$\alpha$}
    \psfrag{i}[][][0.9]{$\beta$}
        \psfrag{j}[][][0.9]{$\gamma$}

\includegraphics[width=10cm]{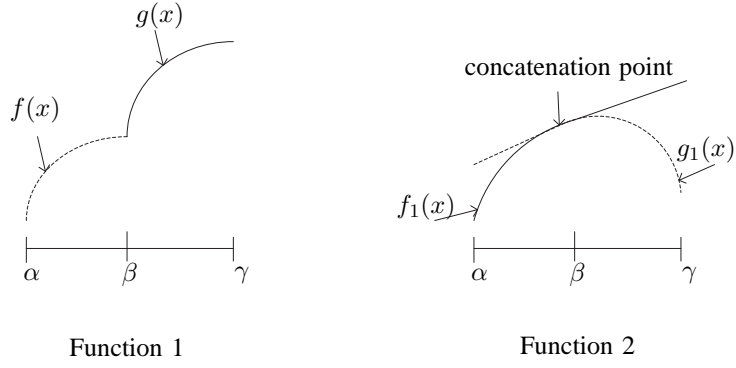}
\caption{Examples for concatenation of two continuous, concave functions. It is easy to see from the figures
the intuition behind Lemma (\ref{lemma_ohad}) . Function 1 is not concave due to the fact that $f'_-(\beta)<g'_+(\beta)$.
Function 2 is concave since $f'_-(\beta)=g'_+(\beta)$. }
\psfragscanoff
\label{concatenate}
\end{psfrags}
\end{center}
\line(1,0){515}
\end{figure*}

\begin{proof}[Proof of Lemma \ref{lemma_concave}]
We now show that the expression
\begin{align*}
H\left(\frac{1}{2}+\frac{\delta-\gamma}{2}\right) +\delta +\gamma -1
+\frac{1+\delta-\gamma}{2}h\left( 1+\frac{\delta-z}{\delta+1-\gamma}\right)
+\frac{1-\delta+\gamma}{2} h\left(\frac{1-z-\gamma}{1+\gamma-\delta}\right)
\end{align*}
is concave in $(\delta,\gamma)$.

To show this we use the fact that $h(z)$ is a concave function in $z$.
Note that, since the binary entropy is concave, the expression $H\left(\frac{1}{2}+\frac{\delta-\gamma}{2}\right) +\delta +\gamma -1$
is concave in $(\delta,\gamma)$.
We examine the expression $\frac{1+\delta-\gamma}{2}h\left( 1+\frac{\delta-z}{\delta+1-\gamma}\right)$.
Let us denote $\eta_i=\frac{1+\delta_i-\gamma_i}{2}\ ,\quad i=1,2$.
For every $\alpha\in [0,1]$ we obtain that
\begin{equation*}
\frac{\alpha\eta_1}{\alpha\eta_1+(1-\alpha)\eta_2}h\left( 1+\frac{\delta_1-z}{\eta_1}\right)+
\frac{(1-\alpha)\eta_2}{\alpha\eta_1+(1-\alpha)\eta_2}h\left( 1+\frac{\delta_2-z}{\eta_2}\right)\stackrel{(i)}{\leq}
h\left( 1+\frac{\alpha\delta_1+(1-\alpha)\delta_2-z}{\alpha\eta_1+(1-\alpha)\eta_2}\right),
\end{equation*}
where $(i)$ is due to the fact that $h(z)$ is concave.
This result implies that
\begin{equation*}
\alpha\eta_1 h\left( 1+\frac{\delta_1-z}{\eta_1}\right)+(1-\alpha)\eta_2 h\left( 1+\frac{\delta_2-z}{\eta_2}\right)\leq
\left(\alpha\eta_1 +(1-\alpha)\eta_2\right)h\left( 1+\frac{\alpha\delta_1+(1-\alpha)\delta_2-z}{\alpha\eta_1+(1-\alpha)\eta_2}\right).
\end{equation*}
Hence, $\frac{1+\delta-\gamma}{2}h\left( 1+\frac{\delta-z}{\delta+1-\gamma}\right)$ is concave in $(\delta,\gamma)$.
It is completely analogous to show that the expression $\frac{1-\delta+\gamma}{2}h\left( 1+\frac{1-z-\gamma}{1-\delta+\gamma}\right)$ is also concave in $(\delta,\gamma)$.
Thus, we derive that the expression
\begin{align*}
H\left(\frac{1}{2}+\frac{\delta-\gamma}{2}\right) +\delta +\gamma -1
+\frac{1+\delta-\gamma}{2}h\left( 1+\frac{\delta-z}{\delta+1-\gamma}\right)
+\frac{1-\delta+\gamma}{2} h\left(\frac{1-z-\gamma}{1+\gamma-\delta}\right)
\end{align*}
is concave in $(\delta,\gamma)$.
\end{proof}
\bibliographystyle{IEEEtran}
\bibliography{IEEEabrv,ref}

\end{document}